\theoremstyle{thmstyleone}%
\newtheorem{theorem}{Theorem}[section]
\newtheorem{corollary}[theorem]{Corollary}
\newtheorem{proposition}[theorem]{Proposition}
\newtheorem{lemma}[theorem]{Lemma}
\theoremstyle{thmstyletwo}%
\newtheorem{remark}[theorem]{Remark}
\theoremstyle{thmstylethree}%
\newtheorem{definition}{Definition}%
\newcommand{\R}{\mathbb{R}}
\newcommand{\Rnn}{\mathbb{R}_{\geq 0}}
\newcommand{\cone}{\ker(S)\cap\Rnn^r}
\newcommand{\LAM}{\Lambda\left(E\right)}
\DeclareMathOperator{\diag}{diag}
\DeclareMathOperator{\rank}{rank}
\DeclareMathOperator{\sign}{sign}
\begin{document}

\title{In distributive phosphorylation catalytic constants enable
  non-trivial dynamics}

\author{Carsten Conradi\footnote{Hochschule für Technik und
    Wirtschaft, Berlin, Germany. carsten.conradi@htw-berlin.de}, Maya
  Mincheva\footnote{Department of Mathematical Sciences, Northern
    Illinois University, DeKalb, IL, USA. mmincheva@niu.edu}}

\maketitle

\begin{abstract}
  Ordered distributive double phosphorylation is a recurrent motif
  in intracellular signaling and control. It is either sequential
  (where the site phosphorylated last is dephosphorylated first) or
  cyclic (where the site phosphorylated first is dephosphorylated
  first). Sequential distributive double phosphorylation has been
  extensively studied and an inequality involving only the catalytic
  constants of kinase and phosphatase is known to be sufficient for
  multistationarity. As multistationarity is necessary for
  bistability it has been argued that these constants enable 
  bistability.

  Here we show for cyclic distributive double
  phosphorylation that if its catalytic constants satisfy the {\em
    very same} inequality, then Hopf bifurcations and hence
  sustained oscillations can occur. Hence we argue that in
  distributive double phosphorylation (sequential or distributive)
  the catalytic constants enable non-trivial dynamics.

  In fact, if the rate constant values in a network of cyclic
  distributive double phosphorylation are such that Hopf
  bifurcations and sustained oscillations can occur, then a network
  of sequential distributive double phosphorylation with the {\em
    same rate constant values} will show multistationarity -- albeit
  for different values of the total concentrations. For
  cyclic distributive double phosphorylation we further describe a 
  procedure to generate rate constant values where Hopf
  bifurcations and hence sustained oscillations can occur. This may,
  for example, allow for an efficient sampling of oscillatory
  regions in parameter space.

  Our analysis is greatly simplified by the fact that it is possible to
  reduce the network of cyclic distributive double phosphorylation to
  what we call a network with a single extreme ray. We summarize key
  properties of these networks.

  \noindent \textbf{Keywords:} 
  distributive phosphorylation, extreme vector, Hopf
  bifurcation, sustained oscillations
\end{abstract}

\section{Introduction}

Phosphorylation is a process where proteins are altered by adding
and removing phosphate groups at designated binding sites. It is a  
recurrent motif in many large reaction networks involved in
intracellular signaling and control \cite{Suwanmajo2018}. Often
spatial effects are neglected and the time dynamics of the
participating chemical species concentrations is described by ordinary
differential equations. There exists a plethora of small
ODE models that include phosphorylation at one or two binding sites
together with the interaction of various regulating chemical species,
see, for example,~\cite{Ramesh2023}. These models exhibit a wide range of
dynamical properties ranging from multistationarity and bistability to
sustained oscillations~\cite{Ramesh2023}.

Phosphorylation and dephosphorylation are catalyzed by two enzymes, a
kinase and a phosphatase. 
As described in~\cite{sig-034,sig-041},
this process can either be processive or distributive:
if it is processive, then all available binding sites are 
phosphorylated or de-phosphorylated upon binding of protein and
kinase/phosphatase. If the process is distributive, then at most one
binding site is modified upon each binding of protein and kinase
or phosphatase.
Furthermore, multisite phosphorylation and dephosphorylation can occur
at 
a random sequence of
binding sites or at an ordered sequence of binding
sites. An ordered mechanism is either sequential or cyclic
\cite{sig-034,sig-041}. In a sequential mechanism the last 
site to be phosphorylated is dephosphorylated first, while in a cyclic
mechanism the first site to be phosphorylated is also dephosphorylated
first
-- as depicted in the reaction schemes of
Fig.~\ref{fig:cyc_d}.

\begin{figure}[!h]
  \centering

  \begin{subfigure}{0.45\textwidth}
    \centering
    \includegraphics[width=0.95\textwidth]{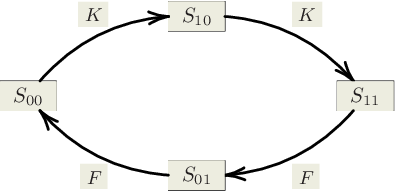}
    \subcaption{
      \label{fig:scheme_cyc}
      Cyclic \& distributive 
    }
  \end{subfigure}
  \begin{subfigure}{0.45\textwidth}
    \centering
    \includegraphics[width=0.95\textwidth]{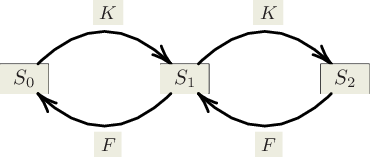}
    \caption{
      \label{fig:scheme_dis}
      Sequential \& distributive 
    }
  \end{subfigure}
  
  \caption{
    \label{fig:cyc_d}
    Reaction schemes describing distributive double phosphorylation 
    of a protein $S$ at two binding sites by a kinase $K$ and a
    phosphatase $F$. Panel~(\subref{fig:scheme_cyc}) cyclic double
    phosphorylation, panel~(\subref{fig:scheme_dis}) distributive. In
    panel ~(\subref{fig:scheme_cyc}) the subscript $_{ij}$ denotes the
    state of the phosphorylation sites:~$0$ unphosphorylated, $1$
    phosphorylated  (e.g.~$S_{10}$ denotes those molecules of $S$,
    where the first site is phosphorylated and the second site is
    unphosphorylated). In  panel~(\subref{fig:scheme_dis}) the
    subscript $_i$ denotes the number of attached phosphate groups
    (e.g.~$S_0$ denotes unphosphorylated protein).
  }
\end{figure}

Networks of sequential phosphorylation have been studied extensively and
it has been shown that already networks without any form of regulation
can exhibit non-trivial dynamics. It is, for example, known that
models of sequential and  processive phosphorylation and
dephosphorylation have a unique, globally attracting steady
state~\cite{processive-Nsite}. For sequential
distributive phosphorylation and dephosphorylation
multistationarity and bistability have been established
\cite{Conradi2008,Hell2015,Holstein2013}. And for a mixed 
mechanism of sequential distributive phosphorylation and processive
dephosphorylation at two binding sites Hopf bifurcations and sustained 
oscillations have been described
in~\cite{Suwanmajo2015} and~\cite{osc-008}.
For more 
information about the dynamics of multisite phosphorylation systems
see the review~\cite{ptm-028}  and the references therein.

Networks of cyclic phosphorylation have been studied
in~\cite{ptm-034}. The authors study in particular the following
mass action network~(\ref{eq:MA_cyc}) derived from  the reaction
scheme of Fig.~\ref{fig:scheme_cyc} (where the notation is the same
as in Fig.~\ref{fig:scheme_cyc} and $K S_{00}$, $K S_{10}$, $F
S_{11}$ and $F S_{01}$ denote enzyme-substrate complexes):
\begin{equation} 
  \begin{split}
    \label{eq:MA_cyc}
    S_{00} + K \ce{<=>[\kappa_1][\kappa_2]} KS_{00}
    \ce{->[\kappa_3]} S_{10}+K 
    \ce{<=>[\kappa_4][\kappa_5]} KS_{10} \ce{->[\kappa_6]} S_{11}+K \\
    S_{11} + F  \ce{<=>[\kappa_7][\kappa_8]} FS_{11}
    \ce{->[\kappa_9]} S_{01}+F \ce{<=>[\kappa_{10}][\kappa_{11}]}
    FS_{01} \ce{->[\kappa_{12}]} S_{00}+F\ .
  \end{split}
\end{equation}
In~\cite{ptm-034} it is shown that for every admissible positive 
value of the total concentrations and all positive values of the
reaction rate constants there exists a unique positive steady
state. The authors furthermore provide parameter values where a Hopf
bifurcation occurs (with the total amount of kinase as a bifurcation
parameter) and sustained oscillations emerge
\cite[Fig.~4]{ptm-034}. The authors however neither describe how the
respective parameter values were found nor do they explain how 
parameter values leading to oscillations can be found.

Here we study the same network~(\ref{eq:MA_cyc}) and provide
an answer to the latter question. In particular, we show that if the
rate constants $\kappa_3$, $\kappa_6$, $\kappa_9$ and $\kappa_{12}$
satisfy the inequality 
\begin{equation}
  \label{eq:ineq_multi_hopf}
  \kappa_3\, \kappa_9 - \kappa_6\, \kappa_{12} <0,
\end{equation}
then Hopf-bifurcations and sustained oscillations can occur. In
enzyme catalysis these constants are known as \emph{catalytic
  constants} (cf.\ the discussion in Section~\ref{sec:discussion} or
\cite{maya-bistab}). 

To be
more precise, we analyze the following irreversible subnetwork of
(\ref{eq:MA_cyc}) 
\begin{equation}
  \begin{split}
    \label{net:cyc_simple}    
    S_{00} + K \ce{->[\kappa_1]} K S_{00}  \ce{->[\kappa_3]} S_{10}+K
    \ce{->[\kappa_4]} K S_{10}  \ce{->[\kappa_6]} S_{11}+K \\
    S_{11} + F  \ce{->[\kappa_7]} F S_{11}  \ce{->[\kappa_9]} S_{01}+F
    \ce{->[\kappa_{10}]} F S_{01}  \ce{->[\kappa_{12}]} S_{00}+F
  \end{split}
\end{equation}
and show that if the values of the catalytic
constants $\kappa_3$, $\kappa_6$, $\kappa_9$ and $\kappa_{12}$ 
satisfy~(\ref{eq:ineq_multi_hopf}), then there exist values for
$\kappa_1$, $\kappa_4$, $\kappa_7$, $\kappa_{10}$ and the total
concentrations such that the steady state Jacobian of
network~(\ref{net:cyc_simple}) has a complex-conjugate pair of
eigenvalues on the imaginary axis. Furthermore, if this eigenvalue
pair crosses the imaginary axis as one of the parameters is varied,
then a Hopf bifurcation occurs at this particular steady state
(Theorem~\ref{thm:stab}). We describe a derivative condition for
this crossing and a procedure to find such parameter values. Based
on a result by \cite{Banaji2018}, we then argue that if a
supercritical Hopf bifurcation occurs in
network~(\ref{net:cyc_simple}) and a stable limit cycle emerges,
then the full network~(\ref{eq:MA_cyc}) will have a stable limit
cycle as well (for appropriately chosen values of $\kappa_2$,
$\kappa_5$, $\kappa_8$ and $\kappa_{11}$). This is
Theorem~\ref{theo:main-cyc-full}.

Finally, we compare our results for {\em cyclic} distributive double
phosphorylation with those obtained in \cite{maya-bistab} for
{\em sequential} distributive double phosphorylation. In
\cite{maya-bistab} an inequality that is sufficient for
multistationarity has been described. Remarkably this inequality
involves the catalytic constants in the same way as our
inequality~(\ref{eq:ineq_multi_hopf}). As multistationarity is
necessary for bistability it is argued in~\cite{maya-bistab} that
these rate constants enable bistability in sequential and
distributive double phosphorylation. Hence we conclude that in
distributive phosphorylation (sequential or cyclic) these constants
enable non-trivial dynamics.

The paper is organized as follows: 
to arrive at our results the ODEs derived from
network~(\ref{eq:MA_cyc}) and (\ref{net:cyc_simple}) are analyzed. For
this purpose we recall in Section~\ref{sec:biochem_rn_general} some
well known facts about ODEs defined by reaction networks and introduce
a special class of reaction networks that we call networks with a
single extreme ray. In this section we also derive conditions for
a simple Hopf bifurcation in such networks.
In Section~\ref{sec:analysis-network-cyc-simple} we first analyze
network~(\ref{net:cyc_simple}) and verify these bifurcation
conditions in Theorem~\ref{thm:stab}. Then we turn to the full
network~(\ref{eq:MA_cyc}) and Theorem~\ref{theo:main-cyc-full}. We
also present the procedure to determine rate constant and total
concentration values. In Section~\ref{sec:discussion} we discuss
inequality~(\ref{eq:ineq_multi_hopf}) in the light of the 
the results presented 
in~\cite{maya-bistab}. Appendix~\ref{App:Rem_A_lamB}
and~\ref{sec:hur-hom} contain some of the longer proofs of the 
results presented in
Section~\ref{sec:biochem_rn_general}. Appendix~\ref{app:jac-h-lambda},
\ref{sec:initial_data} and \ref{sec:Initial_data_Intro_Exa_Seq}
contain information to reproduce the numerical results displayed 
in the figures throughout the paper.

\section{Biochemical reaction networks with mass action kinetics}
\label{sec:biochem_rn_general}

To establish our results  we exploit the special
structure of the Jacobian of
a certain class of reaction networks that we call {\em networks with a
  single extreme ray}. We introduce this class here in full
generality. To this end we first consider a general reaction network
with $n$ species and $r$ reactions in Section~\ref{sec:general_RN} and recall 
the structure of the ODEs defined by such a general network. In
Section~\ref{sec:steady-states} we discuss steady states and
formally define networks with a single extreme ray. In
Section~\ref{sec:general_J} we present a formula for the 
Jacobian of a general reaction network at steady state. In 
Section~\ref{sec:single_extreme_vector} we discuss the steady state
Jacobian of networks with a single extreme ray. Finally, in
Section~\ref{sec:Hopf_single_E} we present conditions for simple Hopf
bifurcations in networks of this kind.

\subsection{Reaction networks with $n$ species and $r$
  reactions} 
\label{sec:general_RN}

We briefly introduce the relevant notation, for a more detailed
discussion we refer to the large body of literature on mass action
networks, for example, \cite{fein-02} or \cite{Conradi2019}.

To every chemical species we associate a variable $x_i$ denoting its 
concentration. For network~(\ref{eq:MA_cyc}) and
(\ref{net:cyc_simple}) we use the association as
given in Table~\ref{tab:species}.

\begin{table}
  \centering
  \begin{tabular}{|c|c|c|c|c|c|c|c|c|c|}\hline
    $x_1$ & $x_2$ & $x_3$ & $x_4$ & $x_5$ & $x_6$ & $x_7$ & $x_8$ & $x_9$ & $x_{10}$ \\ \hline
    $K$ & $F$ & $S_{00}$ & $S_{10}$ & $S_{01}$ & $S_{11}$ & $K S_{00} $ & $KS_{10}$ & $FS_{01} $ & $FS_{11} $ \\ \hline
   \end{tabular}
  \caption{Species and concentration variables for
    network~(\ref{eq:MA_cyc}) and (\ref{net:cyc_simple}).}
  \label{tab:species}
\end{table}

Consider network~(\ref{eq:MA_cyc}), nodes like
$S_{00}+K$ are called complexes. To every complex we associate a
vector $y\in\R^n$ representing the stoichiometry of the associated
chemical species. The complex $S_{00}+K$ consists of one unit of
$S_{00}$ and one unit of $K$, hence, in the ordering of
Table~\ref{tab:species} the vector $y^T=(1,0,1,0,0,0,0,0,0,0)$
represents its stoichiometry. In a similar way one arrives at the ten
complex vectors for networks~(\ref{eq:MA_cyc}) and
(\ref{net:cyc_simple}) given in 
Table~\ref{tab:complexes-cyc-and-cyc-simple} of
Appendix~\ref{app:jac-h-lambda}.

To every reaction $r^{(l)}$ we associate the difference of the complexes at
the tip and the tail of the reaction arrow. For example, to the 
reaction $S_{00}+K \to S_{00}K$ we associate the vector
$r^{(l)}=y^{(2)}-y^{(1)} = (-1,-1,0,1,0,0,0,0,0,0)^T$  (using the labeling
of complexes given in Table~\ref{tab:complexes-cyc-and-cyc-simple} of
Appendix~\ref{app:jac-h-lambda}). All reaction vectors $r^{(i)}$ are
collected as columns of the stoichiometric matrix $S\in\R^{n\times
  r}$: 
\begin{equation}
  \label{eq:def_S}
  S=\left[
    \begin{array}{ccc}
      r^{(1)} & \dots & r^{(r)}
    \end{array}
  \right]\ .
\end{equation}

To every reaction $r^{(l)}$ we further associate a reaction rate
function $v_l(k,x)$ describing the \lq speed\rq{} of the reaction. We
consider only mass action kinetics, hence the reaction rate function
of reaction $r^{(l)}: y^{(i)}\to y^{(j)}$ is given by the monomial
function 
\begin{displaymath}
  v_l(k,x) = k_l\, \prod_{l=1}^n x_l^{y_l^{(i)}} = k_l\, x^{y^{(i)}},
\end{displaymath} 
where $k_l$ is a parameter called the rate constant and $x^y$ is the
customary shorthand notation for the product $\prod_{l=1}^n x_l^{y_l}$
of two $n$-vectors $x$ and $y$.

We collect the vectors at the tail of every reaction as columns of a
matrix $Y$ and note that this matrix may contain several copies of the
same vector:
\begin{equation}
  \label{eq:def_Y}
  Y=\left[
    \begin{array}{ccc}
      y^{(1)} & \dots & y^{(r)}
    \end{array}
  \right]
\end{equation}
We collect all rate constants in a vector $k^T=(k_1$, \ldots, $k_r)$
and the monomials $x^{y^{(i)}}$ in a vector 
\begin{equation}
  \label{eq:def_phi}
  \phi(x) = (x^{y^{(1)}}, \ldots, x^{y^{(r)}})^T\ ,
\end{equation}
where the vectors $y^{(i)}$ reference the columns of the
matrix~$Y$ defined in ~(\ref{eq:def_Y}). The reaction rate function is
then defined using $k$ and $\phi(x)$ as
\begin{equation}
  \label{eq:def_v}
  v(k,x) = \diag(k)\, \phi(x)\ .
\end{equation}
After an ordering of species and reactions is fixed, every reaction
network with mass action kinetics defines a matrix $S$ and a reaction
rate function $v(k,x)$ in a unique way. These objects in turn define
the following system of ODEs:
\begin{equation}
  \label{eq:def_S_v}
  \dot x = S v(k,x)\ ,
\end{equation}
where $\dot x$ denotes the vector of
derivatives with respect to time.
Example are abundant in the literature, see, for example,
\cite[Section~2]{fein-02} or \cite{Conradi2019}.

Often the matrix~$S$ does not have full row rank. In this case let $W$
be a matrix whose columns span $\ker(S^T)$, that is a full rank matrix
$W$ with
$W^T\, S \equiv 0$. 
Then
$W^T\, \dot x \equiv 0$
and for every solution $x(t)$ with initial value $x(0)=x_0$ one
obtains 
\begin{displaymath}
  W^T\, x(t) = W^T\, x_0 = \text{const. }
\end{displaymath}
Hence, if $\rank(S)=s<n$, then one obtains $n-s$ conservation
relations
\begin{equation}
  \label{eq:def_con_rel}
  W^T\, x = c\ .
\end{equation}

\subsection{Steady states}
\label{sec:steady-states}

We are interested in points ($k$,$x$) that are solutions of 
\begin{equation}
  \label{eq:def_ss}
  S v(k,x) = 0\ .
\end{equation}
If ($k$,$x$) is a solution of~(\ref{eq:def_ss}), then $x$ is a steady
state of (\ref{eq:def_S_v}) for the rate constants $k$.

We proceed as in~\cite{Conradi2020} and express the reaction 
rates $v(k,x)$ at a steady state as a nonnegative combination of the
extreme vectors of the pointed polyhedral cone $\ker(S)\cap\Rnn^r$.
This idea goes back to Clarke and coworkers (cf.\ for example,
\cite{Clarke1988,ptm-028}) and it is as follows: ($k$,$x$) satisfy
(\ref{eq:def_ss}), if and only if the corresponding $v(k,x)$ is such
that $v(k,x)\in\cone$ (cf.\ eg.~\cite{Conradi2020}).

Convex polyhedral cones have a finite number of extreme vectors (up to
a  scalar positive multiplication \cite{Rockafellar1970}). Therefore,
any element $v$ of such a cone can be represented as a nonnegative 
linear combination of its extreme vectors $\{E_1, \ldots  ,E_l  \}$ 
\begin{equation}\label{eq:convex-cone}
  v = \sum_{i=1}^l \lambda_i E_i = E \lambda,\; \lambda \in\Rnn^l,
\end{equation}
where $E$ is the matrix with columns $E_1,\dots,E_l$ and
$\lambda^T=(\lambda_1,\dots,\lambda_l)$.

\begin{remark}[The relative interior of $\cone$]
  \label{rem:rel_int_cone}
  \begin{enumerate}[{(}A{)}] 
  \item As explained in, for example, \cite{alg-048}, a
    system~(\ref{eq:def_S_v}) has {\em positive} solutions ($k$,$x$),
    if and only if the matrix $E$ does not contain a zero row. Hence we
    will only consider networks where the matrix~$E$ are of this kind.
  \item We are only interested in positive values of $k$ and $x$. Thus
    we are only interested in $v(k,x)$ that are strictly positive and
    hence belong to the {\em relative interior} of the cone
    $\cone$. Consequently we are only interested in those
    $\lambda\in\Rnn^l$ that yield a positive $E\lambda$. In the
    remainder of the paper we therefore restrict $\lambda\in\Rnn^l$ to
    the set:
    \begin{equation}
      \label{eq:def_LAM}
      \Lambda(E) := \left\{ \lambda\in\Rnn^l | E\lambda >0 \right\}\ .
    \end{equation}
    This set has been introduced in \cite{fein-02}, cf.\
    \cite[Remark~4]{fein-02}.
  \item A more detailed discussion of the relation between positive
    solutions of (\ref{eq:def_ss}), the cone $\cone$ and the generator
    matrix~$E$ can be found in \cite{alg-048}, cf.\ in particular
    \cite[Proposition~6]{alg-048}. 
  \end{enumerate}
\end{remark}

In the analysis of network~(\ref{eq:MA_cyc}) we will later analyze
subnetworks that fit the following definition:

\begin{definition}[Networks with a single extreme ray]
  Consider a reaction network with stoichiometric matrix~$S$. If the
  cone $\cone$ is generated by a single positive vector then we call
  it a {\em network with a single extreme ray}. 
\end{definition}

As discussed in Remark~\ref{rem:rel_int_cone}, if $x>0$ and $k>0$,
then $v(k,x)=\diag(k)\phi(x)>0$. Consequently, 
if ($k$,$x$) satisfy the steady state equation~(\ref{eq:def_ss}), then
$v(k,x)\in\cone$ and there exists a vector $\lambda\in\LAM$, such
that~(\ref{eq:convex-cone}) is satisfied. Thus one may parameterize
all reaction rates at steady states via the equation
\begin{equation}
  \label{eq:diag_k_phi_E_lam}
  \diag(k)\, \phi(x) = E\, \lambda,\; \lambda\in\LAM\ .
\end{equation}
Likewise, given some~$\lambda\in\LAM$ and a
positive~$x$ one obtains a positive vector~$k$ such that~($k$,$x$)
satisfy the steady state equation~(\ref{eq:def_ss}) by
the following formula:
\begin{equation}
  \label{eq:def_k}
  k = \diag(\phi(h))E\lambda,
\end{equation}
where
\begin{equation}
  \label{eq:def_h}
  h=\frac{1}{x} \; .
\end{equation}
We observe that (i) the formula~(\ref{eq:def_h}) is well defined as by
assumption~$x>0$, (ii) that the vector~$k$ from~(\ref{eq:def_k}) is
positive since all entries of $E\lambda$ are positive (as, by
assumption, the matrix~$E$ does not have any zero rows, cf.\
Remark~\ref{rem:rel_int_cone}) and (iii) that 
the formula~(\ref{eq:def_k}) is obtained by solving the $k$-linear
equation~(\ref{eq:diag_k_phi_E_lam}) for~$k$.

\subsection{The Jacobian at steady  state}
\label{sec:general_J}

The equations~(\ref{eq:convex-cone})~\&~(\ref{eq:diag_k_phi_E_lam})
introduce a parametrization of the reaction rates~$v(k,x)$ at steady
states in terms of
the convex parameters~$h$ and $\lambda$. This parameterization can be
used to parameterize the Jacobian at steady state: as explained in
detail in~\cite{Conradi2020}  or~\cite{Clarke1988}, if ($k$,$x$)
satisfy the steady state equation~(\ref{eq:def_ss}) and hence $v(k,x)$
is such that~(\ref{eq:diag_k_phi_E_lam}) holds, then the Jacobian
evaluated at that ($k$, $x$) is given by the following formula
(cf.\cite[Proposition~2]{Conradi2020}):
\begin{equation}
  \label{eq:new-jac}
  J(k,x) = J(h,\lambda) = N \diag(E\lambda) Y^T
  \diag(h),
\end{equation}
where~$Y$ denotes the matrix introduced
in~(\ref{eq:def_Y}).

\subsection{The Jacobian of networks with a single extreme
  vector} 
\label{sec:single_extreme_vector}
 
For any reaction network where the cone $\cone$ is spanned by a single
positive vector~$E$ the parameter $\lambda$ is a positive scalar and
formula~(\ref{eq:new-jac}) is equivalent to
\begin{equation}
  \label{eq:Jac_E_one-D}
  J_{\lambda}(h) = \lambda S\diag(E)Y^T\diag(h)\ .
\end{equation} 
We use $J_{\lambda}(h)$ to denote the Jacobian in this special case.  
 
Here we comment on the characteristic polynomial
of general matrices $J_{\lambda}(h)$, that is on the polynomial
$\det(\mu I - J_{\lambda}(h))$ of an $n\times n$ Jacobian of the
form~(\ref{eq:Jac_E_one-D}). We will assume that $\rank(J_{\lambda}) =s
\leq n$ and we will use the symbol $a_i(h,\lambda)$ to denote the
coefficients of its characteristic polynomial. We will also discuss
the case $\lambda=1$, that is, the characteristic polynomial $\det(\mu
I - J_1(h))$ of $J_1(h)$ with coefficients $b_i(h)$.
For this purpose, the characteristic polynomial of the Jacobian
$J_{\lambda} (h)$ is denoted by
\begin{equation}\label{cp:ai}
\det (\mu I -J_{\lambda} (h)) = \mu^{n-s}(\mu^s+ a_1 (\lambda, h)
\mu^{s-1} + \ldots + a_s(\lambda,h)),
\end{equation} 
while the characteristic polynomial of $J_{1} (h)$ is denoted by
\begin{equation}\label{cp:bi}
\det (\mu I -J_{1} (h)) = \mu^{n-s}(\mu^s+ b_1 (h) \mu^{s-1}
+ \ldots + b_s (h)) \; . 
\end{equation}
Concerning these polynomials we have the following
corollary of Lemma~\ref{lem:4} in Appendix~\ref{App:Rem_A_lamB} where
we show the relationship between the coefficients $a_i(h,\lambda)$ and
$b_i (h)$.
\begin{corollary}
  \label{coro:cp_J_lamH}
  Suppose the matrix $E$ consists of a single positive column vector. 
  Let $J_{\lambda}(h)$ be as in (\ref{eq:Jac_E_one-D}) with
  $\rank(J_{\lambda}(h))=\rank(J_1(h)) =s <n$ and let
  $a_i(\lambda,h)$, $b_i(h)$ be the coefficients of the
  characteristic polynomials 
  in~(\ref{cp:ai}) and 
  in~(\ref{cp:bi}). Then the coefficients $a_i(\lambda,h)$ and $b_i(h)$
  satisfy: 
  \begin{equation}
    \label{eq:ai_bi}
    a_i(\lambda,h) = \lambda^i b_i(h),\, i=1,\, \ldots,\, s\ . 
  \end{equation}
  Moreover, the polynomial $\det(\mu I - J_{\lambda}(h))$ is 
  given by the following formula:
  \begin{equation}
    \begin{split}
      \label{eq:char_poly_J_H}
      \det(\mu I_n - J_{\lambda} (h)) &= \det(\mu I_n-\lambda J_1 (h)) \\
      &= \mu^{n-s} \lambda^s
      \left(
        \left(\frac{\mu}{\lambda}\right)^s
        + \sum_{i=1}^s  b_i(h)
        \left(\frac{\mu}{\lambda}\right)^{s-i} 
      \right)
    \end{split}
  \end{equation}
\end{corollary}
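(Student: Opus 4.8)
The plan is to prove Corollary~\ref{coro:cp_J_lamH} directly from the structural relation $J_\lambda(h)=\lambda\,J_1(h)$, which is immediate upon comparing the two displays~(\ref{eq:Jac_E_one-D}): replacing $\lambda$ by $1$ in $J_\lambda(h)=\lambda\,S\diag(E)Y^T\diag(h)$ gives $J_1(h)=S\diag(E)Y^T\diag(h)$, whence $J_\lambda(h)=\lambda\,J_1(h)$. Everything else is elementary linear algebra applied to a scalar multiple of a fixed matrix, so the real content here is just bookkeeping of how the characteristic polynomial scales. I would first record the observation that $\rank(J_\lambda(h))=\rank(J_1(h))$ for any $\lambda>0$, so the common value $s$ in~(\ref{cp:ai}) and~(\ref{cp:bi}) is well defined and the factor $\mu^{n-s}$ appears in both.

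Next I would compute $\det(\mu I_n - J_\lambda(h))$ by the substitution trick: for $\lambda\neq 0$,
\begin{equation*}
  \det(\mu I_n - \lambda J_1(h)) = \lambda^n \det\!\left(\tfrac{\mu}{\lambda} I_n - J_1(h)\right).
\end{equation*}
Plugging in the known factorization~(\ref{cp:bi}) of $\det(\nu I_n - J_1(h))$ with $\nu=\mu/\lambda$ gives
\begin{equation*}
  \lambda^n \left(\tfrac{\mu}{\lambda}\right)^{n-s}\!\left(\left(\tfrac{\mu}{\lambda}\right)^s + \sum_{i=1}^s b_i(h)\left(\tfrac{\mu}{\lambda}\right)^{s-i}\right) = \mu^{n-s}\lambda^{s}\left(\left(\tfrac{\mu}{\lambda}\right)^s + \sum_{i=1}^s b_i(h)\left(\tfrac{\mu}{\lambda}\right)^{s-i}\right),
\end{equation*}
which is exactly~(\ref{eq:char_poly_J_H}); the passage from the first line to the second just moves $\lambda^{n-s}$ from the prefactor into the denominators of the bracketed terms. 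Expanding the bracket, the $\mu^{s-i}$ coefficient is $\lambda^{s}\cdot\lambda^{-(s-i)}b_i(h)=\lambda^i b_i(h)$, and matching with~(\ref{cp:ai}) yields $a_i(\lambda,h)=\lambda^i b_i(h)$, i.e.~(\ref{eq:ai_bi}). Alternatively, since the corollary is billed as a consequence of Lemma~\ref{lem:4} of Appendix~\ref{App:Rem_A_lamB}, one can instead simply invoke that lemma for the coefficient identity and then derive~(\ref{eq:char_poly_J_H}) by substituting~(\ref{eq:ai_bi}) into~(\ref{cp:ai}).

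There is essentially no obstacle: the statement is a formal scaling identity and the only thing to be slightly careful about is the case $\lambda=0$ (excluded here, since $\lambda\in\LAM$ forces $\lambda>0$ for a single extreme ray) and the bookkeeping of the exponent $n-s$ when it is split between the prefactor $\lambda^n$ and the shifted argument $\mu/\lambda$. If one wants to avoid dividing by $\lambda$ altogether, one can argue by continuity/polynomial identity: both sides of~(\ref{eq:char_poly_J_H}) are polynomials in $\mu$ and $\lambda$, they agree for all $\lambda>0$ by the computation above, hence they agree identically — but since we only use $\lambda>0$ this subtlety is cosmetic. I would therefore present the short substitution computation as the proof, noting that~(\ref{eq:ai_bi}) also follows directly from Lemma~\ref{lem:4}.
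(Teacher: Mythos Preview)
Your argument is correct. The paper does not give a separate proof of the corollary; it simply records it as a consequence of Lemma~\ref{lem:4}, whose proof in turn passes through Lemmas~\ref{lem:1}--\ref{lem:3} by expressing the coefficients $a_k$ as signed sums of $k\times k$ principal minors and observing that each such minor of $A=\lambda B$ picks up a factor $\lambda^k$. Your primary route is different and more direct: you use the one-line identity $\det(\mu I_n-\lambda J_1(h))=\lambda^n\det\bigl((\mu/\lambda)I_n-J_1(h)\bigr)$ and read off both~(\ref{eq:char_poly_J_H}) and~(\ref{eq:ai_bi}) by substitution, bypassing the principal-minor machinery entirely. This buys you a shorter, self-contained proof that needs only $\lambda\neq 0$; the paper's approach, while longer, has the minor advantage of yielding~(\ref{eq:ai_bi}) as a polynomial identity in $\lambda$ without ever dividing by it. Since you also note that one can alternatively just invoke Lemma~\ref{lem:4}, you have in effect covered both approaches.
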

\noindent
We observe the following consequences of
Corollary~\ref{coro:cp_J_lamH}:

\begin{remark}
  \label{rem:consequences_E_vector}
  \begin{enumerate}[{(}I{)}]
  \item By (\ref{eq:char_poly_J_H}) it follows that if $\omega(h)$ is an
    eigenvalue of $J_1(h)$, then $\mu(\lambda,\omega) = \lambda
    \omega(h)$ is an eigenvalue of $J_{\lambda}(h)$. 
  \item In our setting $\lambda>0$, hence 
    the sign of the real part of the eigenvalues $Re(\mu(\lambda,h))$
    of $J_{\lambda}(h)$ is independent of $\lambda$:
    $\sign(Re(\mu(\lambda,h))) = \sign(Re(\omega(h)))$.
  \item In particular,  the matrix $J_{\lambda}(h)$ has a purely
    imaginary pair of eigenvalues $\pm i \lambda \omega(h)$ if and
    only if $J_1 (h)$ has a purely imaginary pair $\pm i \omega(h)$. 
  \end{enumerate}
\end{remark}

\subsection{Simple Hopf bifurcations in networks with a single extreme ray}
\label{sec:Hopf_single_E}

We recall the definition of a {\em simple Hopf bifurcation} for a
parameter dependent system of ODEs of the form  $\dot{x} = g_p(x)$,  
where $x \in \R^s$, and $g_p(x)$ varies smoothly in~$p$ and~$x$.
Let $x^* \in \R^s$ be  a steady state of the ODE system for some fixed
value $p_0$, that is, $g_{p_0}(x^*)=0$. Furthermore,  we assume that
we have a smooth curve of steady states around~$p_0$:
\begin{displaymath}
  p ~\mapsto ~ x(p)~ \text{with $x(p_0)=x^*$.}
\end{displaymath}
That is, $g_{p}\left( x(p) \right)= 0$ for all $p$ close enough
to $p_0$.

Further let $J(x(p),p)$ be the Jacobian of $g_p(x)$ evaluated
at~$x(p)$. If, as $p$ varies, a complex-conjugate pair of eigenvalues
of $J(x(p),p)$ crosses the imaginary axis, then there exists a Hopf
bifurcation at $(x(p_0,p_0))$.
A {\em simple Hopf bifurcation} occurs at $(x(p_0),p_0)$, if no 
other eigenvalue crosses the imaginary axis at the same value~$p_0$. 
In this case, a limit cycle arises. If the 
Hopf bifurcation is supercritical, then stable periodic solutions are
generated for nearby parameter values~\cite{BifTheo-009}.

Similar to~\cite{osc-008} and \cite{Conradi2020} we want to build on
results described in \cite{BifTheo-009} and \cite{osc-011} to
establish Hopf bifurcations. 
As in previous work (cf.\
e.g.~\cite{Conradi2020,BifTheo-009,osc-011}), we will use a
criterion based on the following Hurwitz determinants:
\begin{definition} \label{def:hurwitz}
  The {\em $i$-th Hurwitz matrix} of a univariate polynomial 
  $p(z)= a_0 z^s + a_{1} z^{s-1} + \cdots + a_s$  
  is the following $i \times i$ matrix:
  \[
    H_i ~=~ 
    \begin{pmatrix}
      a_1 & a_0 & 0 & 0 & 0 & \cdots & 0 \\
      a_3 & a_2 & a_1 & a_0 & 0 & \cdots & 0 \\
      \vdots & \vdots & \vdots &\vdots & \vdots &  & \vdots \\
      a_{2i-1} & a_{2i-2} & a_{2i-3} & a_{2i-4} &a_{2i-5} &\cdots & a_i
    \end{pmatrix}~,
  \]
  in which the $(k,l)$-th entry is $a_{2k-l}$ 
  as long as $0\leq 2 k - l \leq s$, and
  $0$ otherwise.  The determinants $\det(H_i)$ are called 
  \emph{Hurwitz determinants}.
\end{definition}

To every square matrix one can associate Hurwitz matrices via its  
characteristic polynomial in an analogous way.
In the following we consider two families of Hurwitz matrices
constructed according to Definition~\ref{def:hurwitz}: matrices
$H_l(\lambda,h)$ obtained from coefficients $a_i(\lambda,h)$
of~(\ref{cp:ai}) and matrices $G_l(h)$ of coefficients $b_i(h)$
of~(\ref{cp:bi}). Our analysis is greatly simplified by the
relationship between the two families established in 
Proposition~\ref{pro:hom} below (a proof is given in
Appendix~\ref{sec:hur-hom}): 
\begin{proposition}\label{pro:hom}
  The Hurwitz determinants for the characteristic polynomials
  (\ref{cp:ai}) and (\ref{cp:bi}) satisfy the following equation
  \begin{displaymath}
    \det \left(H_{l} (\lambda, h)\right) = \lambda^{l(l+1)/2}  \det
    \left(G_{l} (h)\right), \; \; l=1,2, \ldots ,s \; .
  \end{displaymath}
\end{proposition}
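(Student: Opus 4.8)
The plan is to derive the scaling law for the Hurwitz determinants directly from the relationship $a_i(\lambda,h)=\lambda^i b_i(h)$ established in Corollary~\ref{coro:cp_J_lamH} (equation~(\ref{eq:ai_bi})). First I would write out the $l\times l$ Hurwitz matrix $H_l(\lambda,h)$ explicitly using Definition~\ref{def:hurwitz}: its $(k,j)$-entry is $a_{2k-j}(\lambda,h)$ whenever $0\le 2k-j\le s$ and $0$ otherwise. Substituting $a_{2k-j}(\lambda,h)=\lambda^{2k-j}b_{2k-j}(h)$ (and noting $a_0=b_0=1$, so the identity $a_0=\lambda^0 b_0$ holds trivially, while any zero entry stays zero under the substitution), I see that the $(k,j)$-entry of $H_l(\lambda,h)$ equals $\lambda^{2k-j}$ times the $(k,j)$-entry of $G_l(h)$.

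The key observation is that this factor $\lambda^{2k-j}=\lambda^{2k}\cdot\lambda^{-j}$ factors as a product of a row-dependent part and a column-dependent part. Hence
\begin{displaymath}
  H_l(\lambda,h) = \diag\bigl(\lambda^{2},\lambda^{4},\dots,\lambda^{2l}\bigr)\, G_l(h)\, \diag\bigl(\lambda^{-1},\lambda^{-2},\dots,\lambda^{-l}\bigr).
\end{displaymath}
Taking determinants and using multiplicativity gives
\begin{displaymath}
  \det\bigl(H_l(\lambda,h)\bigr) = \lambda^{2+4+\cdots+2l}\cdot\lambda^{-(1+2+\cdots+l)}\cdot\det\bigl(G_l(h)\bigr) = \lambda^{2\cdot\frac{l(l+1)}{2}-\frac{l(l+1)}{2}}\det\bigl(G_l(h)\bigr),
\end{displaymath}
and the exponent simplifies to $l(l+1)/2$, which is exactly the claimed formula. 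The range $l=1,\dots,s$ is inherited from the fact that $a_i,b_i$ are defined for $i=1,\dots,s$ in~(\ref{eq:ai_bi}), and entries $a_{2k-j}$ with index exceeding $s$ are zero in both $H_l$ and $G_l$ simultaneously, so the diagonal conjugation argument is unaffected.

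I do not expect any serious obstacle here; the only point requiring a little care is the bookkeeping of which entries are zero and verifying that the substitution $a_m\mapsto\lambda^m b_m$ is consistent at $m=0$ (where both sides are $1$) and at the out-of-range indices (where both sides are $0$), so that the clean factorization $H_l=D_{\text{row}}\,G_l\,D_{\text{col}}$ with diagonal matrices $D_{\text{row}}=\diag(\lambda^{2},\dots,\lambda^{2l})$ and $D_{\text{col}}=\diag(\lambda^{-1},\dots,\lambda^{-l})$ is literally correct entrywise. An equivalent and perhaps more transparent route, which avoids introducing negative powers of $\lambda$, is to pull a factor $\lambda^{2k}$ out of row $k$ for each $k$ and a factor $\lambda^{s-j+1}$ (or simply group the remaining powers appropriately) out of each column; either way the total power of $\lambda$ extracted is $\sum_{k=1}^l 2k - \sum_{j=1}^l j = l(l+1)/2$. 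This is the content of the proof referenced in Appendix~\ref{sec:hur-hom}.
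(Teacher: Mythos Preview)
Your argument is correct: the factorisation $H_l(\lambda,h)=\diag(\lambda^2,\dots,\lambda^{2l})\,G_l(h)\,\diag(\lambda^{-1},\dots,\lambda^{-l})$ holds entrywise (including at the zero entries and at the $a_0=b_0=1$ entry), and taking determinants gives the exponent $l(l+1)-l(l+1)/2=l(l+1)/2$ as claimed.

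The paper's proof in Appendix~\ref{sec:hur-hom} takes a different and less direct route. It expands $\det H_l$ via the cycle formula~(\ref{eq-cyc-formula}) for determinants, writes each cycle product $H_l[c]$ as $a_{i_1+d_{i_1i_2}}\cdots a_{i_k+d_{i_ki_1}}$, substitutes $a_m=\lambda^m b_m$, and uses that the cycle differences sum to zero (Lemma~\ref{ldp}) to conclude that each cycle contributes a factor $\lambda^{i_1+\cdots+i_k}$; since the cycles of a permutation of $\{1,\dots,l\}$ are disjoint, the total exponent is $1+2+\cdots+l=l(l+1)/2$. Your diagonal-conjugation argument reaches the same endpoint but is considerably more elementary: it avoids the cycle machinery entirely and makes the structural reason for the scaling (row and column homogeneity) transparent. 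The paper's approach, by contrast, tracks the exponent term-by-term through the Leibniz expansion, which is more laborious but perhaps generalises more readily to situations where the scaling is not a simple product of row and column factors.
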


The following proposition is a specialization of
\cite[Proposition~1]{Conradi2020} and \cite[Theorem~2]{osc-011} to a
network where $\cone$ is generated by a single positive vector. It
implies that for such networks to detect a simple Hopf bifurcation one
only needs to study the polynomial~(\ref{cp:bi}) and its Hurwitz
matrices $G_i$, $i=1,2,\ldots , s$.

\begin{proposition}\label{pr:hopf-single-lambda}
Let $\dot{x} = S v(k,x) $ be an ODE system model for  a reaction network
with mass action kinetics where $\rank (S)=s$. Suppose that the matrix
$E$ in (\ref{eq:convex-cone}) consists of a single positive vector and
let $J_{\lambda}( h)$ be the corresponding Jacobian in convex
parameters.
Further let the characteristic
polynomials of $J_{\lambda} (h)$ and $J_{1} (h)$ be as in
(\ref{cp:ai}) and (\ref{cp:bi}), respectively. 
If there exists a fixed value  $h=h^*$ such that 
\begin{equation}
  \begin{split}
    \label{eq:Hurwitz-Ineqs}
    b_s(h^*)&>0 \text{ and} \\
    \det(G_1(h^*)) &> 0,\; \ldots,\; 
    \det(G_{s-2}(h^*)) > 0 \text{ and} \\
    \det(G_{s-1}(h^*))&= 0\ ,
  \end{split}
\end{equation}
then
\begin{enumerate}[{(}a{)}]
\item $J_{1 } (h^*)$ has a single pair of purely imaginary eigenvalues,
\item $J_{\lambda } (h^*)$ has a single pair of purely imaginary
  eigenvalues for all $\lambda >0$,
\item for the dynamical system $\dot{x}= S v(k,x)$
  there exists a simple Hopf bifurcation at $h=h^*$ for all $\lambda >0$
  if there exists some $l \in \{1,2, \ldots , n \}$ such that
  \begin{equation}
    \label{eq:Hur-derivative-G}
    \frac{\partial \det (G_{n-1})}{\partial
      h_l}|_{h_l=h_l^*} \neq 0\ .
  \end{equation}
\end{enumerate}
\end{proposition}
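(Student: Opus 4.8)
The plan is to reduce everything about Proposition~\ref{pr:hopf-single-lambda} to known statements about $J_1(h)$ and then transport the conclusions to $J_\lambda(h)$ via the homogeneity relations already in hand. First I would handle parts (a) and (b). By Corollary~\ref{coro:cp_J_lamH} and Remark~\ref{rem:consequences_E_vector}, the eigenvalues of $J_\lambda(h)$ are exactly $\lambda$ times the eigenvalues of $J_1(h)$, so it suffices to prove (a); (b) is then immediate because $\lambda>0$ preserves signs of real parts and sends a purely imaginary pair $\pm i\omega$ to $\pm i\lambda\omega$. For (a), I would invoke the standard Hurwitz/Li\'enard--Chipart type criterion (as used in \cite{osc-011}, cf.\ \cite[Proposition~1]{Conradi2020}): for the degree-$s$ factor $\mu^s + b_1\mu^{s-1}+\cdots+b_s$ of the characteristic polynomial in~(\ref{cp:bi}), the conditions $b_s(h^*)>0$, $\det(G_1(h^*))>0,\ldots,\det(G_{s-2}(h^*))>0$ together with $\det(G_{s-1}(h^*))=0$ force a single pair of purely imaginary roots and all remaining roots in the open left half-plane. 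Concretely, $\det(G_{s-1})=0$ with the lower-order Hurwitz determinants positive and $b_s>0$ is precisely the boundary-of-stability configuration that rules out a zero root (since $b_s\neq 0$) and rules out more than one imaginary pair. The $\mu^{n-s}$ factor contributes only the zero eigenvalue of multiplicity $n-s$, which is harmless because the dynamics is confined to stoichiometric compatibility classes of dimension $s$; I would state this reduction-to-the-invariant-subspace remark explicitly, referencing the conservation relations~(\ref{eq:def_con_rel}).

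Next I would address part (c), the genuine Hopf transversality. The idea is: conditions~(\ref{eq:Hurwitz-Ineqs}) guarantee that at $h=h^*$ we sit exactly on the hypersurface $\{\det(G_{s-1})=0\}$ in $h$-space with a simple pair $\pm i\omega^*$ on the imaginary axis, and the classical fact (see \cite{BifTheo-009}, and the way it is used in \cite{Conradi2020,osc-008}) is that the crossing of this pair is transversal along a one-parameter path through $h^*$ precisely when the corresponding Hurwitz determinant has nonzero derivative along that path. So I would take the bifurcation parameter to be whichever coordinate $h_l$ (equivalently, by~(\ref{eq:def_h}), whichever $1/x_l$, and ultimately by~(\ref{eq:def_k}) whichever $k_i$ or $c_i$) satisfies $\frac{dG_{n-1}}{dh_l}\big|_{h_l^*}\neq 0$, argue that $\det(G_{n-1})$ and $\det(G_{s-1})$ agree up to the relevant structure (note $G_{n-1}$ versus $G_{s-1}$: the Hurwitz matrix of the full characteristic polynomial including the $\mu^{n-s}$ factor — I would need to reconcile the index, presumably the statement uses $G_{n-1}$ built from the padded polynomial, whose vanishing is equivalent to $\det(G_{s-1})=0$ given $b_s>0$), and conclude that the derivative of the real part of the eigenvalue pair along $h_l$ is nonzero. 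Combined with (a)/(b) this is exactly the Hopf bifurcation theorem's hypotheses, and a simple Hopf bifurcation occurs; applying Remark~\ref{rem:consequences_E_vector}(II)--(III) again gives it for all $\lambda>0$ with bifurcation point $h^*$ independent of $\lambda$. For the converse direction in (c), I would argue contrapositively: if $\frac{dG_{n-1}}{dh_l}\big|_{h^*}=0$ for every $l$, then along every coordinate path the pair of eigenvalues fails to cross transversally, so no simple Hopf bifurcation with an $h_l$ (hence no $k_i$, $c_i$) as parameter is present.

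The main obstacle I anticipate is bookkeeping the relationship between the ``raw'' convex parameters $h$ and the actual physical bifurcation parameters $k_i$ and $c_i$, and making the transversality statement~(\ref{eq:Hur-derivative-G}) genuinely equivalent to a nonzero eigenvalue-crossing speed in the original ODE. The subtlety is that moving a single $h_l$ does not correspond to moving a single $k_i$ with the rest fixed: from~(\ref{eq:def_k}), $k=\diag(\phi(h))E\lambda$ couples all $k_i$ to all $h_l$. The clean way around this — and the way \cite{Conradi2020} handles it — is to observe that the map from $(h,\lambda)$ to $(k,c)$ (with $c$ read off from the conservation relations evaluated at $x=1/h$) is a local diffeomorphism onto an open set of physically admissible parameters, so a transversal crossing in $h$-space pushes forward to a transversal crossing in $(k,c)$-space along \emph{some} curve, and by a further dimension count one can always arrange that curve to be the variation of a single scalar among the $k_i$ or $c_i$ while readjusting the others to stay on the steady-state manifold. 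I would cite \cite[Proposition~1]{Conradi2020} and \cite[Theorem~2]{osc-011} for the precise version of this argument and for the equivalence of~(\ref{eq:Hur-derivative-G}) with the classical transversality condition, so that the proof here only needs to invoke Proposition~\ref{pro:hom} and Corollary~\ref{coro:cp_J_lamH} to pass from the $G_l$-world (i.e.\ $\lambda=1$) to the $H_l$-world (general $\lambda>0$) — which is exactly where the single-extreme-ray hypothesis does its work.
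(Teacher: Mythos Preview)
Your proposal is correct and follows essentially the same route as the paper: parts (a) and (c) are obtained by invoking \cite[Proposition~1]{Conradi2020} and \cite[Theorem~2\&Remark~2]{osc-011}, and the passage from $\lambda=1$ to general $\lambda>0$ goes through Corollary~\ref{coro:cp_J_lamH} and Proposition~\ref{pro:hom}. One small difference: for (b) you use Remark~\ref{rem:consequences_E_vector} directly (the eigenvalues of $J_\lambda$ are $\lambda$ times those of $J_1$), whereas the paper instead transfers the sign conditions $b_s>0$, $\det G_l>0$, $\det G_{s-1}=0$ to the corresponding $a_s$, $\det H_l$ via Corollary~\ref{coro:cp_J_lamH} and Proposition~\ref{pro:hom} and re-applies \cite[Proposition~1]{Conradi2020} to $J_\lambda$; your route is marginally cleaner. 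For (c) the paper does the analogous thing --- it differentiates the identity $\det H_{s-1}(\lambda,h)=\lambda^{s(s-1)/2}\det G_{s-1}(h)$ in $h_l$ to push the nonvanishing derivative from $G_{s-1}$ to $H_{s-1}$ and then cites \cite{osc-011} for the transversality equivalence --- so your extended discussion of the $h\leftrightarrow(k,c)$ diffeomorphism and the $G_{n-1}$ versus $G_{s-1}$ bookkeeping, while not wrong, is more than the paper actually writes out; it simply defers those points to the cited references.
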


\begin{proof}
  \begin{enumerate}[{(}a{)}]
  \item The proof follows by~\cite[Proposition~1]{Conradi2020}. 
  \item
    By Corollary~\ref{coro:cp_J_lamH}
    $\sign(b_s(h^*))=\sign(a_s(h^*,\lambda))$ for all
    $\lambda>0$. Likewise, by Proposition~\ref{pro:hom}
    $\sign(\det(G_l(h^*)))=\sign(\det(H_l(h^*,\lambda)))$, $l=1$, \ldots, $s$ for
    all $\lambda>0$. Hence by~(a) and
    by~\cite[Proposition~1]{Conradi2020} the Jacobian 
    $J_{\lambda} (h^*)$ has a single pair of purely imaginary
    eigenvalues for all $\lambda >0$.
  \item The fact that $J_{\lambda} (h_l^*)$ for all $\lambda >0$ has a
    single pair of purely imaginary eigenvalues has been established
    in~(b).
    Proposition~\ref{pro:hom} implies the following relationship
    between the two derivatives of
    $\det\left(H_{s-1}(h,\lambda)\right)$ and $\det\left(
      G_{s-1}(h,\lambda) \right)$ with respect to the $h_l$:
    \begin{displaymath}
      \frac{\partial \det (H_{s-1}(h,\lambda))}{\partial h_l}
      =
      \lambda^{\frac{s(s+1)}{2}}
      \frac{\partial \det(G_{s-1}(h))}{\partial  h_l},\, l=1, \ldots, n\ .
    \end{displaymath}
    Thus, if~(\ref{eq:Hur-derivative-G}) holds for
    some $l\in\{ 1$, \ldots, $n\}$, then, for the same~$l$, one has
    \begin{displaymath}
      \frac{\partial \det(H_{s-1}(h,\lambda))}{\partial h_l}|_{(h^*,\lambda)}
      \neq 0 \text{ for all $\lambda>0$.}
    \end{displaymath}
    Therefore, by (a), (b) and by~\cite[Theorem~2\&Remark~2]{osc-011},
    (c) follows as well.
  \end{enumerate}
\end{proof}

\begin{remark}
  The parameter $\lambda$ is not suited as a bifurcation parameter
  in~(\ref{eq:Hur-derivative-G}) as by Proposition~\ref{pro:hom}
  $\det(H_{s-1}(h,\lambda)) = \lambda^{\frac{s(s-1)}{2}}
  \det(G_{s-1}(h))$. Hence $\frac{\partial
    \det(H_{s-1}(h,\lambda))}{\partial \lambda}|_{h=h^*} = 0$, whenever
  $\det(G_{s-1}(h^*))=0$ (cf.\ Remark~\ref{rem:consequences_E_vector}
  explaining that the existence of a pair of purely imaginary
  eigenvalues is independent of $\lambda$).
\end{remark}

\section{Analysis of networks of cyclic distributive double
  phosphorylation}
\label{sec:analysis-network-cyc-simple}

To derive a dynamical model of network~(\ref{eq:MA_cyc}) we assign the
variables given in Table~\ref{tab:species} to the chemical species and
obtain the following ODEs:
\begin{subequations}
  \begin{align}
    \label{eq:sys_cyc_full_1}
    \dot x_{1} &= -\kappa_{1} x_{1} x_{3}- \kappa_{4} x_{1}
                 x_{4}+ (\kappa_{2} +\kappa_{3}) x_{7}+ (\kappa_{5}
                 +\kappa_{6}) x_{8}  \\
    \dot x_{2} &= -\kappa_{10} x_{2} x_{5}-\kappa_{7} x_{2} x_{6}+ (\kappa_{8}
                 +\kappa_{9}) x_{10} +(\kappa_{11}+\kappa_{12}) x_{9}  \\
    \dot x_{3} &= -\kappa_{1} x_{1} x_{3}+ \kappa_{2} x_{7}+\kappa_{12}
                 x_{9}  \\
    \dot x_{4} &= -\kappa_{4} x_{1} x_{4}+\kappa_{3} x_{7}+\kappa_{5}
                 x_{8} \\ 
    \dot x_{5} &= -\kappa_{10} x_{2} x_{5} + \kappa_{9} x_{10}+\kappa_{11}
                 x_{9} \\
    \dot x_{6} &= -\kappa_{7} x_{2} x_{6} + \kappa_{8} x_{10}+\kappa_{6}
                 x_{8} \\ 
    \dot x_{7} &= -(\kappa_{2} +\kappa_{3}) x_{7} + \kappa_{1} x_{1} x_{3} \\
    \dot x_{8} &= -(\kappa_{5} +\kappa_{6}) x_{8} + \kappa_{4} x_{1} x_{4}\\ 
    \dot x_{9} &= -(\kappa_{11} +\kappa_{12}) x_{9} + \kappa_{10} x_{2}
                 x_{5}  \\
    \label{eq:sys_cyc_full_10}
    \dot  x_{10} &= -(\kappa_{8} +\kappa_{9} ) x_{10}+\kappa_{7} x_{2} x_{6} 
  \end{align}
\end{subequations}

Using the same variables $x_1$, \ldots, $x_{10}$ as in
Table~\ref{tab:species}, we obtain the following ODEs:

\begin{subequations}
  \begin{align}
    \label{eq:sys-cyc-1}
    \dot x_{1} &= -\kappa_1 x_{1} x_{3}-\kappa_4 x_{1} x_{4}+\kappa_3
                 x_{7}+\kappa_6 x_{8}, \\
    \dot x_{2} &= \phantom{-}\kappa_9 x_{10}-\kappa_{10} x_{2} x_{5}-\kappa_7
                 x_{2} x_{6}+\kappa_{12} x_{9},  \\
    \dot x_{3} &= -\kappa_1 x_{1} x_{3}+\kappa_{12} x_{9}, \\
    \dot x_{4} &= -\kappa_4 x_{1} x_{4}+\kappa_3 x_{7},  \\
    \dot x_{5} &= \phantom{-}\kappa_9 x_{10}-\kappa_{10} x_{2} x_{5}, \\
    \dot x_{6} &= -\kappa_7 x_{2} x_{6}+\kappa_6 x_{8}, \\
    \dot x_{7} &= \phantom{-}\kappa_1 x_{1} x_{3}-\kappa_3 x_{7}, \\
    \dot x_{8} &= \phantom{-}\kappa_4 x_{1} x_{4}-\kappa_6 x_{8}, \\
    \dot x_{9} &= \phantom{-}\kappa_{10} x_{2} x_{5}-\kappa_{12} x_{9}, \\
    \label{eq:sys-cyc-10}
    \dot x_{10} &= -\kappa_9 x_{10}+\kappa_7 x_{2} x_{6}\ .
  \end{align}
\end{subequations}
Both networks have the same set of three conservation relations, one
for the total amount of  kinase ($c_1$), phosphatase ($c_2$), and
substrate ($c_3$), respectively: 
\begin{align}
  \notag
  x_{1}+x_{7}+x_{8\phantom{0}} &= c_1 \\
  \label{eq:con_rel}
  x_{2}+x_{9}+ x_{10} &= c_2 \\
  \notag
  x_{3}+x_{4}+x_{5}+x_{6}+x_{7}+x_{8}+x_{9}+x_{10} &= c_3
\end{align}

\subsection{Steady states of network~(\ref{net:cyc_simple})}
\label{sec:steady-states-cyc-simple}

For the network~(\ref{net:cyc_simple}) the matrix~$E$ consists of a
single vector of all~1:
\begin{equation}
  \label{eq:E_cyc_d2_simple}
  E^T=(1,1,1,1,1,1,1,1),
\end{equation}
hence network~(\ref{net:cyc_simple}) is a network with a single
extreme ray and we can apply the results of
Section~\ref{sec:single_extreme_vector} and~\ref{sec:Hopf_single_E}.

Given $E$ from~(\ref{eq:E_cyc_d2_simple})
condition~(\ref{eq:diag_k_phi_E_lam}) consequently becomes (cf.\
eq.~(\ref{eq:v_cyc_E}) of Appendix~\ref{app:jac-h-lambda} for
$v(k,x)$):
\begin{displaymath}
  \kappa_1 x_{1} x_{3} = \kappa_3 x_{7} = \kappa_4
  x_{1} x_{4} = \kappa_6 x_{8} = \kappa_7 x_{2}
  x_{6} = \kappa_9 x_{10} = \kappa_{10} x_{2}
  x_{5} = \kappa_{12} x_{9} = \lambda\ .
\end{displaymath}
These equations can be solved for $x$ (in terms of $k$ and
$\lambda$) to obtain the following steady state parameterization:
\begin{equation}
  \begin{split}
    \label{eq:x_lam_k}
    x_{3} &=  \frac{\lambda}{\kappa_1 x_{1}},\, x_{4} =  \frac{\lambda}{\kappa_4
      x_{1}},\, x_{5} =  \frac{\lambda}{\kappa_{10} x_{2}},\, x_{6} =
    \frac{\lambda}{\kappa_7 x_{2}}, \\
    x_{7} &=  \frac{\lambda}{\kappa_3},\, x_{8} =
    \frac{\lambda}{\kappa_6},\, x_{9} =  \frac{\lambda}{\kappa_{12}},\,
    x_{10} = \frac{\lambda}{\kappa_9},
  \end{split}
\end{equation}
where $x_1$ and $x_2$  are arbitrary positive numbers.
Similarly, solving for $k$ one obtains 
\begin{equation}
  \label{eq:ki_hi_lam}
  \begin{split}
    \kappa_1 &= h_1 h_3\lambda ,  \; \kappa_3 =  h_7 \lambda, \; \kappa_4 =
    h_1 h_4\lambda, \;   \kappa_6 =  h_8 \lambda, \\
    \kappa_7 &=  h_2 h_6\lambda,\; \kappa_9 =  h_{10} \lambda, \;   \kappa_{10} =  h_2
    h_5 \lambda, \;   \kappa_{12} = h_9\lambda,
  \end{split}
\end{equation}
where $h_i=\frac{1}{x_i}$, $i=1$, \ldots, $10$ (this
is~(\ref{eq:def_k}) for network~(\ref{net:cyc_simple}) with $v(k,x)$
given in Appendix~\ref{app:jac-h-lambda}).

\subsection{Simple Hopf bifurcations for 
  network~(\ref{net:cyc_simple})}
\label{sec:Hopf}

The Jacobian $J_{\lambda} (h)$ of network~(\ref{net:cyc_simple}) 
computed via (\ref{eq:new-jac}) is 
\begin{equation} \tiny
  \label{eq:Jac_cyc_d2_simple}
  J_{\lambda}(h) = \lambda \,
  \left[
    \begin{array}{rrrrrrrrrr}
      -2 h_{1} & 0 & -h_{3} & -h_{4} & 0 & 0 & h_{7} & h_{8} & 0 & 0 \\
      0 & -2 h_{2} & 0 & 0 & -h_{5} & -h_{6} & 0 & 0 & h_{9} & h_{10} \\
      -h_{1} & 0 & -h_{3} & 0 & 0 & 0 & 0 & 0 & h_{9} & 0 \\
      -h_{1} & 0 & 0 & -h_{4} & 0 & 0 & h_{7} & 0 & 0 & 0 \\
      0 & -h_{2} & 0 & 0 & -h_{5} & 0 & 0 & 0 & 0 & h_{10} \\
      0 & -h_{2} & 0 & 0 & 0 & -h_{6} & 0 & h_{8} & 0 & 0 \\
      h_{1} & 0 & h_{3} & 0 & 0 & 0 & -h_{7} & 0 & 0 & 0 \\
      h_{1} & 0 & 0 & h_{4} & 0 & 0 & 0 & -h_{8} & 0 & 0 \\
      0 & h_{2} & 0 & 0 & h_{5} & 0 & 0 & 0 & -h_{9} & 0 \\
      0 & h_{2} & 0 & 0 & 0 & h_{6} & 0 & 0 & 0 & -h_{10} \\
    \end{array}
  \right]\ .
\end{equation}
For $J_{\lambda}(h)$ given in (\ref{eq:Jac_cyc_d2_simple}) one has
$\rank(J_{\lambda}(h))=7$. Hence its characteristic polynomial is of
the following form  (cf.~supplementary file
\texttt{Cyc\_dd\_2\_coeffs\_charpoly.nb} and 
Corollary~\ref{coro:cp_J_lamH}):
\begin{equation}\label{eq:cp-cyc-ai}
  \det \left(\mu I - J_{\lambda }(h) \right)
  ~=~ \mu^3 \left(
    \mu^7 + \lambda b_1(h) \mu^{6} + \ldots + \lambda^{6} b_{6}(h) \mu +
    \lambda^7 b_7(h)
  \right),
\end{equation}
where the coefficients $b_1(h)$, \ldots, $b_7(h)$ are the coefficients
of the characteristic polynomial of the matrix $J_1(h)$.
With the next corollary we adapt
Proposition~\ref{pr:hopf-single-lambda} to the Jacobian of
network~(\ref{net:cyc_simple}). This allows us to work with the
simpler coefficients $b_i(h)$.

\begin{corollary}\label{cor:ds-g-hopf}
  Consider the dynamical systm~(\ref{eq:sys-cyc-1}) --
  (\ref{eq:sys-cyc-10}) defined by 
  network~(\ref{net:cyc_simple}) with Jacobian 
   $J_{\lambda} (h)$ as in~(\ref{eq:Jac_cyc_d2_simple}).
  Consider the polynomial~(\ref{eq:cp-cyc-ai}) for $\lambda=1$ and
  obtain its coefficients $b_1(h), \ldots, $ $b_7(h)$ and
  Hurwitz-matrices $G_1(h)$, \ldots, $G_6(h)$.
  Assume that at some $h=h^*$ the following conditions hold:
  \begin{equation}
    \begin{split}
      \label{eq:Hurwitz-Ineqs}
      b_7(h^*)&>0 \text{ and} \\
      \det(G_1(h^*)) &> 0,\; \ldots,\; 
      \det(G_5(h^*)) > 0 \text{ and} \\
      \det(G_6(h^*))&= 0\ .
    \end{split}
  \end{equation}
  Then, for all $\lambda>0$, the dynamical system (\ref{eq:sys-cyc-1})
  -- (\ref{eq:sys-cyc-10}) has
  a simple Hopf bifurcation at~$h=h^*$, if additionally 
  \begin{equation}
    \label{eq:Hurwitz-derivative}
    \exists l\in\{1, \ldots, 10\} \text{ such that } \frac{\partial
      \det(G_6)}{\partial h_l}|_{h=h^*} \neq 0 \quad  \ .
  \end{equation}
\end{corollary}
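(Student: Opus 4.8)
The plan is to obtain Corollary~\ref{cor:ds-g-hopf} as a direct instantiation of Proposition~\ref{pr:hopf-single-lambda}, using the concrete data for network~(\ref{net:cyc_simple}) recorded in Section~\ref{sec:analysis-network-cyc-simple}. The first step is to verify that the hypotheses of Proposition~\ref{pr:hopf-single-lambda} are met. The network~(\ref{net:cyc_simple}) has stoichiometric matrix $S$ of rank $s=7$ (this is the statement $\rank(J_\lambda(h))=7$ recorded after~(\ref{eq:Jac_cyc_d2_simple}), together with the fact that for this network $\rank(S)=\rank(J_\lambda(h))$), and by~(\ref{eq:E_cyc_d2_simple}) the matrix $E$ is the single positive column $E^T=(1,\dots,1)$, so the cone $\cone$ is generated by a single positive vector. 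Thus all structural hypotheses of Proposition~\ref{pr:hopf-single-lambda} hold, and $J_\lambda(h)$ in~(\ref{eq:Jac_cyc_d2_simple}) is exactly the Jacobian in convex parameters referred to there, with characteristic polynomial~(\ref{eq:cp-cyc-ai}).

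Next I would match up indices. Since $s=7$ and $n=10$, the chain of Hurwitz conditions~(\ref{eq:Hurwitz-Ineqs}) in Proposition~\ref{pr:hopf-single-lambda}, namely $b_s(h^*)>0$, $\det(G_1(h^*))>0,\ldots,\det(G_{s-2}(h^*))>0$, $\det(G_{s-1}(h^*))=0$, becomes precisely $b_7(h^*)>0$, $\det(G_1(h^*))>0,\ldots,\det(G_5(h^*))>0$, $\det(G_6(h^*))=0$ — which is the displayed hypothesis~(\ref{eq:Hurwitz-Ineqs}) of the corollary. Under these conditions, parts~(a) and~(b) of Proposition~\ref{pr:hopf-single-lambda} give that $J_1(h^*)$ and $J_\lambda(h^*)$ (for every $\lambda>0$) each have a single pair of purely imaginary eigenvalues. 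Part~(c) of Proposition~\ref{pr:hopf-single-lambda} then states that, for all $\lambda>0$, there is a simple Hopf bifurcation at $h=h^*$ if and only if $\frac{dG_{n-1}}{dh_l}\big|_{h_l=h_l^*}\neq 0$ for some $l\in\{1,\dots,n\}$. Here $n-1=9$, but because the $3$-fold factor $\mu^{n-s}=\mu^3$ in~(\ref{eq:cp-cyc-ai}) contributes only trivially to the Hurwitz data of the full degree-$n$ characteristic polynomial, the relevant nontrivial Hurwitz determinant is $\det(G_{s-1})=\det(G_6)$; the condition $\frac{dG_9}{dh_l}\neq 0$ is equivalent to $\frac{\partial\det(G_6)}{\partial h_l}\neq 0$. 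I would state this identification explicitly (it is the same phenomenon already used in Remark~\ref{rem:consequences_E_vector} and in the proof of Proposition~\ref{pr:hopf-single-lambda}(c), where $G_{n-1}$ is really being used as a stand-in for $\det(G_{s-1})$), giving the equivalence with~(\ref{eq:Hurwitz-derivative}).

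The only genuinely delicate point — and the step I would be most careful about — is this bookkeeping between the degree-$n$ Hurwitz determinant $\det(H_{n-1})$ (or $\det(G_{n-1})$) appearing in the cited transversality criterion of \cite{osc-011} and the degree-$s$ determinant $\det(G_{s-1})$ that actually carries the information when $s<n$. One must check that the extra factor $\mu^{n-s}$ (equivalently, the $n-s$ zero eigenvalues forced by the conservation relations~(\ref{eq:con_rel})) does not itself move across the imaginary axis and does not contaminate the relevant Hurwitz determinant, so that transversality of the purely imaginary pair is faithfully detected by $\det(G_6)$. This is precisely the content already handled inside the proof of Proposition~\ref{pr:hopf-single-lambda}, so for the corollary it suffices to invoke that proposition with $s=7$, $n=10$ and translate $G_{n-1}\rightsquigarrow G_{s-1}=G_6$ and $d/dh_l\rightsquigarrow \partial/\partial h_l$. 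Everything else is a routine substitution of the specific numbers $s=7$, $n=10$ into the general statement.
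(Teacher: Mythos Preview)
Your approach is correct and is exactly what the paper does: the corollary is stated without a separate proof and is meant as the direct instantiation of Proposition~\ref{pr:hopf-single-lambda} with $s=7$ and $n=10$, using the single positive generator~(\ref{eq:E_cyc_d2_simple}) and the Jacobian~(\ref{eq:Jac_cyc_d2_simple}).

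One clarification on the point you flag as ``genuinely delicate'': the appearance of $G_{n-1}$ in part~(c) of Proposition~\ref{pr:hopf-single-lambda} is simply a typographical slip for $G_{s-1}$. If you look at the proof of part~(c) in the paper, the computation is carried out entirely with $\det(H_{s-1})$ and $\det(G_{s-1})$, and the conclusion being drawn from \cite[Theorem~2\&Remark~2]{osc-011} concerns the nonvanishing of $\tfrac{\partial}{\partial h_l}\det(G_{s-1})$. So you do not actually need the argument you sketch about the factor $\mu^{n-s}$ contributing only trivially to the Hurwitz data, nor any separate reconciliation of $G_9$ with $G_6$; once you read $n-1$ as $s-1$ in the proposition, the transversality condition~(\ref{eq:Hur-derivative-G}) specializes immediately to~(\ref{eq:Hurwitz-derivative}) with $s-1=6$, and the rest of your plan goes through verbatim.
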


The coefficients $b_0 (h)$, \ldots, $b_5 (h)$ and $b_7 (h)$, as well
as the Hurwitz determinants $\det(G_2 (h))$, \ldots, $\det(G_5 (h))$
contain only monomials with positive sign (cf.~supplementary file
\texttt{Cyc\_dd\_2\_coeffs\_charpoly.nb}). 
This establishes the following Lemma and Corollary:
\begin{lemma}
  \label{lem:sum_of_posinomials}
  Consider the coefficients of the characteristic polynomial
  (\ref{eq:cp-cyc-ai}) of $J_{\lambda}(h)$ given in
  (\ref{eq:Jac_cyc_d2_simple}) for $\lambda=1$ and its Hurwitz
  determinants. For all $h>0$ the following holds:
  \begin{enumerate}[{(}A{)}]
  \item $b_0(h) >0$, \ldots, $b_5(h)>0$ and
    $b_7(h)>0$ and 
  \item $\det(G_2 (h))>0$, \ldots, $\det(G_5 (h))>0$.
  \end{enumerate}
\end{lemma}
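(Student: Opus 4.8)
The plan is a direct symbolic verification, feasible precisely because the Jacobian is available in the fully explicit form~(\ref{eq:Jac_cyc_d2_simple}). I would set $\lambda=1$ there and expand
\[
  \det(\mu I - J_1(h)) = \mu^{3}\bigl(\mu^{7} + b_1(h)\,\mu^{6} + \dots + b_7(h)\bigr),
\]
reading off each $b_i(h)$ as a polynomial in $h_1,\dots,h_{10}$; by Corollary~\ref{coro:cp_J_lamH} this $\lambda=1$ computation also controls the coefficients for general $\lambda$, and in any case the present Lemma concerns only $J_1(h)$. Since $b_0(h)\equiv1$, that part of~(A) is trivial. For the rest, recall that $b_i(h)$ is the sum of the $i\times i$ principal minors of $-J_1(h)$, and that every entry of $-J_1(h)$ is either $0$ or $\pm$ a rational multiple of a single $h_j$; hence each $b_i(h)$ is a homogeneous polynomial of degree $i$ with rational coefficients and, a priori, mixed signs. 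The assertion of part~(A) is the non-obvious fact that for $i\in\{1,\dots,5\}$ and for $i=7$, after full expansion and collection, every surviving monomial of $b_i(h)$ has a \emph{positive} coefficient. As a sanity check one has $b_1(h) = -\operatorname{trace} J_1(h) = 2h_1 + 2h_2 + h_3 + \dots + h_{10}$, which is of this form; the general statement is the one recorded in the supplementary notebook \texttt{Cyc\_dd\_2\_coeffs\_charpoly.nb}. Granted this ``posinomial'' property, strict positivity on the open orthant $\{h>0\}$ is immediate, a finite sum of positive monomials evaluated at a positive point being positive.

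For part~(B) I would assemble the Hurwitz matrices $G_2(h),\dots,G_5(h)$ from the $b_i(h)$ according to Definition~\ref{def:hurwitz} and expand the four determinants. A Hurwitz determinant is an alternating sum of products of coefficients, so negative monomials could in principle survive; the content of~(B) is that for these particular matrices all such contributions cancel, so that $\det(G_2(h)),\dots,\det(G_5(h))$ are again posinomials and hence strictly positive for $h>0$ --- once more the symbolic check carried out in \texttt{Cyc\_dd\_2\_coeffs\_charpoly.nb}. Note that, together with $\det(G_1(h)) = b_1(h) > 0$ from~(A), this establishes the positivity of all of $\det(G_1(h)),\dots,\det(G_5(h))$ as well as $b_7(h)>0$, i.e.\ every standing hypothesis in~(\ref{eq:Hurwitz-Ineqs}) other than the single equality $\det(G_6(h))=0$.

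The only real obstacle is the size of the objects: by Proposition~\ref{pro:hom} and~(\ref{eq:ai_bi}) every monomial of $\det(G_l(h))$ has total $h$-degree $l(l+1)/2$, so $\det(G_5(h))$ is a homogeneous polynomial of degree $15$ in ten variables, far beyond hand computation, and the argument is genuinely a computer-algebra verification rather than a closed-form manipulation. It is worth stressing what the Lemma deliberately leaves out: neither $b_6(h)$ nor $\det(G_6(h))$ is sign-definite. Indeed $\det(G_6(h))=0$ is precisely the equation that forces a single pair of purely imaginary eigenvalues in Corollary~\ref{cor:ds-g-hopf}, and the sign of $\det(G_6(h))$ --- after re-expressing $h$ in terms of the rate constants via the steady-state parametrisation~(\ref{eq:ki_hi_lam}) --- is the quantity connected to the catalytic-constant inequality~(\ref{eq:k_ineq}), equivalently~(\ref{eq:ineq_multi_hopf}). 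Thus the role of Lemma~\ref{lem:sum_of_posinomials} is to discharge, once and for all, the ``automatic'' part of the Hurwitz conditions, so that only $\det(G_6(h^*))=0$ and the transversality condition~(\ref{eq:Hurwitz-derivative}) remain to be arranged in the following sections.
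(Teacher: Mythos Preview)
Your proposal is correct and follows essentially the same approach as the paper: a direct symbolic expansion (referencing the same supplementary file \texttt{Cyc\_dd\_2\_coeffs\_charpoly.nb}) showing that $b_0(h),\dots,b_5(h)$, $b_7(h)$ and $\det(G_2(h)),\dots,\det(G_5(h))$ contain only monomials with positive coefficients, whence positivity on $\{h>0\}$ is immediate. The paper's own justification is in fact terser than yours, consisting of a single sentence pointing to the computer-algebra verification; your additional remarks on degrees, the role of the lemma, and what is deliberately omitted are accurate and helpful but go beyond what the paper records.
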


The Hurwitz determinant $\det(G_6(h))$ contains monomials of both 
signs (cf.~supplementary file \texttt{cyc\_dd\_2\_detH6.txt}) , hence it
can potentially be zero. To establish this, we 
consider $\det(G_6(h))$ as a polynomial in $h_1$, $h_2$, $h_3$ and
$h_6$ only and study its Newton polytope. Using \texttt{polymake}
\cite{Assarf2017,Gawrilow2000} we compute the following hyperplane
representation of this Newton polytope:
\begin{align*}
  -h_1 &\geq -6
  & -h_1 - h_2 &\geq -11
  & -h_1 - h_2 - h_3 &\geq -15 \\
  -h_1 - h_2 - h_3 - h_6 &\geq -18
  &  h_6 &\geq 0
  & -h_2 &\geq -6 \\
  -h_1 - h_3 &\geq -11
  & -h_3 - h_6 &\geq -11
  & -h_2 - h_3 - h_6 &\geq -15 \\
  -h_1 - h_3 - h_6 &\geq -15
  & h_1 &\geq 0
  & -h_2 - h_3 &\geq -11\\
  -h_1 - h_2 - h_6 &\geq -15
  & -h_3 &\geq -6
  & -h_2 - h_6 &\geq -11 \\
  h_2 &\geq 0
  & -h_1 - h_6 &\geq -11 & -h_6 &\geq -6 \\
  h_3 &\geq 0
\end{align*}
We study the coefficients of $\det(G_6)$ as a polynomial
in $h_1$, $h_2$, $h_3$ and $h_6$ and find that some of these contain
the factor 
\begin{displaymath}
  h_{10} h_{7}-h_{8} h_{9}\ .
\end{displaymath}
In fact, visual inspection of all coefficients shows that 
all monomials with exponent vectors contained in the hyperplane  
\begin{displaymath}
    h_1 + h_2 + h_3 +h_6 = 18
\end{displaymath}
have such a coefficient that factors   $h_{10} h_{7}-h_{8} h_{9}$.
Hence we want to make those monomials dominant. To achieve
this we use the following transformation (based on the normal vector
of the hyperplane):
\begin{equation}
  \label{eq:trafo_supp_hyp}
  h_{1}\to t,h_{2}\to t,h_{3}\to t,h_{6}\to t.
\end{equation}
The result is the following degree 18 polynomial in $t$, with
coefficients that are polynomials in $h_4$, $h_5$  and $h_7$, \ldots,
$h_{10}$:  
\begin{displaymath}
  \begin{split}
    D_6(t) &= 324 t^{18} (h_{10}+h_{7}) (h_{10} h_{7}-h_{8}
    h_{9}) + \ldots \\
    &\quad
    + h_{10} h_{4} h_{5} h_{7} h_{8} h_{9} \\
    &\qquad 
    \cdot (h_{10}+h_{4})
    (h_{10}+h_{5}) (h_{10}+h_{7}) (h_{10}+h_{8}) (h_{10}+h_{9}) \\
    & \qquad 
    \cdot (h_{4}+h_{5}) (h_{4}+h_{7}) (h_{4}+h_{8}) (h_{4}+h_{9}) \\
    &\qquad 
    \cdot (h_{5}+h_{7}) (h_{5}+h_{8}) (h_{5}+h_{9}) \\
    & \qquad
    \cdot (h_{7}+h_{8}) (h_{7}+h_{9}) \\
    &\qquad
    \cdot (h_{8}+h_{9})\ . 
  \end{split}
\end{displaymath}
As the constant coefficient is a sum of positive monomials one has
$D_6(0)>0$. Thus, if
\begin{displaymath}
  h_{10} h_{7}-h_{8} h_{9} < 0,
\end{displaymath}
then there exists a $t_1>0$ such that $D_6(t_1)=0$  and $D_6 (t)<0$ for
$t>t_1$ by the Intermediate Value Theorem.
These observations are the basis for the following result:
\begin{lemma}
  \label{lem:existence_hopf}
  Consider the coefficients of the characteristic polynomial given in
  (\ref{eq:cp-cyc-ai}) for $\lambda=1$  and obtain its
  coefficients $b_1(h), \ldots, $ $b_7(h)$ and its Hurwitz-matrices
  $G_1(h)$, \ldots, $G_6(h)$.
  Let $h(t)$ be such that
  \begin{equation}
    \label{eq:h_main}
    h(t) = \left( t, t, t, h_4, h_5, t, h_7, h_8, h_9, h_{10} \right)^T
  \end{equation}
  with
  \begin{equation}
    \label{eq:h_ineq}
    h_{7}h_{10}-h_{8} h_{9} < 0\ .
  \end{equation}
  Let $b_i(h(t))= b_i (t)$ and $D_i (t)  =\det G_i (h(t))$, $i=1$,
  \ldots, $6$. Then there exists a positive real number $t_1$,  such
  that
  \begin{displaymath}
    D_6(t) < 0  \text{ for } t >t_1 \text{ and }
    D_6(t_1) = 0\ .
  \end{displaymath}
   In addition, $b_7(t_1)>0$  and $D_i (t_1) >0$ for $i=1,2, \ldots , 5$.
\end{lemma}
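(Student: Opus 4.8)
The plan is to establish the claimed sign behavior of the univariate polynomial $D_6(t)$ and then verify the remaining strict positivity conditions at the root $t_1$. First, I would justify the structure of $D_6(t)$ already displayed above: substituting the transformation~(\ref{eq:trafo_supp_hyp}) into the polynomial $\det(G_6(h))$ — viewed as a polynomial in $h_1,h_2,h_3,h_6$ with coefficients that are polynomials in the remaining $h_i$ — yields a univariate polynomial in $t$ of degree exactly $18$, since the supporting hyperplane $h_1+h_2+h_3+h_6=18$ of the Newton polytope is attained and the inner normal $(1,1,1,1)$ selects precisely the monomials on that face. By the observation recorded before the lemma, every such top-degree monomial carries the factor $h_7h_{10}-h_8h_9$, so the leading coefficient of $D_6(t)$ is $324\,(h_{10}+h_7)(h_{10}h_7-h_8h_9)$ up to the contributions collected there; with~(\ref{eq:h_ineq}) this leading coefficient is strictly negative, hence $D_6(t)\to-\infty$ as $t\to\infty$. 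The constant term $D_6(0)$ is the coefficient of the $t$-free monomials, which is the explicit sum of products of sums $h_i+h_j$ displayed above; since every $h_i>0$ this is a sum of positive monomials, so $D_6(0)>0$. By the Intermediate Value Theorem there is a least $t_1>0$ with $D_6(t_1)=0$, and since $D_6$ has negative leading coefficient and no further sign change is needed, one can take $t_1$ to be the largest positive root, giving $D_6(t)<0$ for all $t>t_1$ and $D_6(t_1)=0$.

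Next I would treat the auxiliary positivity statements at $t=t_1$. By Lemma~\ref{lem:sum_of_posinomials} the coefficient $b_7(h)$ and the Hurwitz determinants $\det(G_2(h)),\ldots,\det(G_5(h))$ are sums of monomials with positive coefficients for every $h>0$; substituting $h=h(t_1)$ (which is a positive vector because $t_1>0$ and $h_4,h_5,h_7,\ldots,h_{10}>0$) therefore gives $b_7(t_1)>0$ and $D_i(t_1)>0$ for $i=2,3,4,5$ immediately. The only case not covered by Lemma~\ref{lem:sum_of_posinomials} is $D_1(t_1)=\det(G_1(h(t_1)))=b_1(h(t_1))$; but $b_1(h)$ equals $-\operatorname{tr}(J_1(h))$, which by inspection of~(\ref{eq:Jac_cyc_d2_simple}) is $2h_1+2h_2+h_3+h_4+h_5+h_6+h_7+h_8+h_9+h_{10}$, a sum of positive terms, so $D_1(t_1)>0$ as well. (Equivalently, one notes $b_1(h)>0$ for all $h>0$, which is part (A) of Lemma~\ref{lem:sum_of_posinomials} once $b_1$ is included among $b_0,\ldots,b_5$.)

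Finally I would assemble these pieces into the statement of the lemma, which is exactly the conjunction: existence of $t_1>0$ with $D_6(t_1)=0$ and $D_6(t)<0$ for $t>t_1$, together with $b_7(t_1)>0$ and $D_i(t_1)>0$ for $i=1,\ldots,5$. The main obstacle is the first part — specifically, being sure that the transformation~(\ref{eq:trafo_supp_hyp}) genuinely produces the degree-$18$ leading term claimed and that this term does not vanish or change character; this rests on the Newton-polytope computation (the \texttt{polymake} hyperplane description) and on the structural fact that all monomials on the face $h_1+h_2+h_3+h_6=18$ factor through $h_7h_{10}-h_8h_9$, both of which are verified by the supplementary symbolic computations cited above. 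The positivity conditions at $t_1$ are then routine consequences of Lemma~\ref{lem:sum_of_posinomials} and the elementary form of $b_1$.
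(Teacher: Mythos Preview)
Your proposal is correct and follows essentially the same approach as the paper: the existence of $t_1$ with $D_6(t_1)=0$ and $D_6(t)<0$ for $t>t_1$ is derived from the sign of the leading coefficient (via the Newton-polytope face and the factor $h_7h_{10}-h_8h_9$) together with positivity of $D_6(0)$ and the Intermediate Value Theorem, and the remaining positivity conditions are read off from Lemma~\ref{lem:sum_of_posinomials}. Your treatment is in fact slightly more careful than the paper's in two places: you make explicit that one must take the \emph{largest} positive root to guarantee $D_6(t)<0$ for all $t>t_1$, and you spell out the case $D_1=b_1>0$ separately (the paper tacitly folds this into Lemma~\ref{lem:sum_of_posinomials}(A)).
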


\begin{proof}
  Choose positive values $h_4^*$, $h_5^*$ and positive
  values $h_7^*$, $h_8^*$, $h_9^*$ and $h_{10}^*$ that
  satisfy~(\ref{eq:h_ineq}). Fix $h_4=h_4^*$, $h_5=h_5^*$ and
  $h_7=h_7^*$, \ldots, $h_{10} = h_{10}^*$ to obtain $h^*(t)$ (which
  now depends only on $t$). Evaluate the $b_i$'s and $\det(G_i)$ at
  $h^*(t)$ to obtain the $t$-polynomials $b_i(t)\equiv
  b_i(h^*(t))$ and
  $D_i(t)\equiv\det(G_i(h^*(t)))$. 
  
  The existence of $t_1$ with $D_6(t)<0$, $t>t_1$ and $D_6(t_1)=0$ has
  been established above. By
  Lemma~\ref{lem:sum_of_posinomials} $b_1$, \ldots $b_5$ and $b_7$ as
  well as 
  $\det(G_2)$, \ldots, $\det(G_5)$ are sums of positive 
  monomials and thus, in particular, positive if evaluated at
  $h^*(t_1)$. Thus $b_7(t_1)$ and  $D_i(t_1)$,  $i=1,2, \ldots ,5$
  are positive. 
\end{proof}

\begin{theorem}\label{thm:stab}
  Consider the dynamical system~(\ref{eq:sys-cyc-1}) --
  (\ref{eq:sys-cyc-10}) arising from
  network~(\ref{net:cyc_simple}) with Jacobian $J_{\lambda}(h)$
  as in (\ref{eq:Jac_cyc_d2_simple}). Let $h(t)$ be as
  in~(\ref{eq:h_main}) and $\lambda>0$. Fix the remaining $h_i=h_i^*$
  such that the inequality~(\ref{eq:h_ineq}) is satisfied. Then for
  $t=t_1$ computed as in Lemma~\ref{lem:existence_hopf}, the
  dynamical system~(\ref{eq:sys-cyc-1}) -- (\ref{eq:sys-cyc-10})
  undergoes a simple Hopf bifurcation for  
  all $\lambda >0$ if
  \begin{displaymath}
    \frac{d \det(G_6(t))}{d t}|_{t=t_1} \neq 0\ .
  \end{displaymath}
\end{theorem}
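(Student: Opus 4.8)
The plan is to verify that the hypotheses of Corollary~\ref{cor:ds-g-hopf} hold at the point $h=h^*(t_1)$, so that the Hopf bifurcation follows directly. Recall that Corollary~\ref{cor:ds-g-hopf} states that the dynamical system~(\ref{eq:sys-cyc-1}) -- (\ref{eq:sys-cyc-10}) undergoes a simple Hopf bifurcation for all $\lambda>0$ at $h=h^*$ provided (i) $b_7(h^*)>0$, (ii) $\det(G_1(h^*))>0,\ldots,\det(G_5(h^*))>0$, (iii) $\det(G_6(h^*))=0$, and (iv) $\partial\det(G_6)/\partial h_l|_{h=h^*}\neq0$ for some $l\in\{1,\ldots,10\}$. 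The point of Theorem~\ref{thm:stab} is that all of (i)--(iii) have already been arranged by Lemma~\ref{lem:existence_hopf}, while (iv) is precisely the remaining hypothesis being imposed in the statement of the theorem.

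First I would set $h=h^*(t_1)$ where $t_1$ is the positive number produced by Lemma~\ref{lem:existence_hopf} under the standing assumption~(\ref{eq:h_ineq}) on the fixed values $h_7^*,h_8^*,h_9^*,h_{10}^*$. By Lemma~\ref{lem:existence_hopf} we immediately have $D_6(t_1)=\det(G_6(h^*(t_1)))=0$, which gives hypothesis (iii); and $b_7(t_1)=b_7(h^*(t_1))>0$ together with $D_i(t_1)=\det(G_i(h^*(t_1)))>0$ for $i=1,\ldots,5$, which gives hypotheses (i) and (ii). Thus conditions (i)--(iii) of Corollary~\ref{cor:ds-g-hopf} are satisfied at $h=h^*(t_1)$.

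It then remains only to address hypothesis (iv), the nondegeneracy (transversality) condition. Here I would observe that along the one-parameter family $h(t)$ of~(\ref{eq:h_main}) all four coordinates $h_1,h_2,h_3,h_6$ are equal to $t$, so by the chain rule
\begin{displaymath}
  \frac{d\det(G_6(h(t)))}{dt}\Big|_{t=t_1}
  = \sum_{l\in\{1,2,3,6\}}\frac{\partial\det(G_6)}{\partial h_l}\Big|_{h=h^*(t_1)}\ .
\end{displaymath}
Hence if $\frac{d\det(G_6(t))}{dt}|_{t=t_1}\neq0$, then at least one of the four partial derivatives $\partial\det(G_6)/\partial h_l|_{h=h^*(t_1)}$ with $l\in\{1,2,3,6\}$ is nonzero, so condition (iv) of Corollary~\ref{cor:ds-g-hopf} holds with that index $l$. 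Applying Corollary~\ref{cor:ds-g-hopf} (which in turn rests on Proposition~\ref{pr:hopf-single-lambda} and the Hurwitz-determinant homogeneity of Proposition~\ref{pro:hom}) then yields the simple Hopf bifurcation at $h=h^*(t_1)$ for all $\lambda>0$, completing the proof.

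I do not expect a serious obstacle in this argument: the heavy lifting — the sign analysis of the $b_i$ and $\det(G_i)$ via the Newton-polytope/dominant-monomial computation, and the extraction of the factor $h_7h_{10}-h_8h_9$ — has already been carried out in Lemma~\ref{lem:sum_of_posinomials} and Lemma~\ref{lem:existence_hopf}. The only mildly delicate point is the bookkeeping in the chain-rule step: one must be sure that a nonzero total derivative along the diagonal direction $(1,1,1,0,0,1,0,\ldots,0)$ forces a nonzero partial in one of the coordinate directions, which is immediate since a sum of real numbers vanishing would be needed for all summands to fail to witness nonvanishing — but that would contradict the assumed nonvanishing of the sum. (One should also note that this diagonal transversality hypothesis is genuinely the cleanest one to check numerically, which is presumably why it is stated in this form rather than coordinate-wise.)
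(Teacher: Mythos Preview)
Your proposal is correct and follows essentially the same approach as the paper: invoke Lemma~\ref{lem:existence_hopf} to obtain conditions~(\ref{eq:Hurwitz-Ineqs}), then use the chain rule to pass from the nonvanishing $t$-derivative to a nonvanishing partial in some $h_l$, and conclude via Corollary~\ref{cor:ds-g-hopf}. The only difference is cosmetic --- you spell out the chain-rule sum over $l\in\{1,2,3,6\}$ explicitly, whereas the paper simply cites the chain rule.
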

\begin{proof}
  If $t=t_1$ is  as in Lemma~\ref{lem:existence_hopf}, the
  conditions~(\ref{eq:Hurwitz-Ineqs})
  of Corollary~\ref{cor:ds-g-hopf}  are satisfied.
  If $\frac{d \det(G_6)}{d t}|_{t=t_1} \neq 0$, then at least one of
  the derivatives $\frac{\partial \det(G_6)}{\partial h_i}|_{h=h(t_1)}
  \neq 0$ (by the chain rule) and hence condition
  (\ref{eq:Hurwitz-derivative}) is satisfied. Thus, by
  Corollary~\ref{cor:ds-g-hopf}, the dynamical
  system~(\ref{eq:MA_cyc}) undergoes a Hopf bifurcation at $t=t_1$ in
  this case.
\end{proof}

Theorem~\ref{thm:stab} establishes the existence of simple Hopf
bifurcations for the the dynamical system~(\ref{eq:sys-cyc-1}) --
(\ref{eq:sys-cyc-10}) derived from network~(\ref{net:cyc_simple}).
In the following remark we observe that inequality
(\ref{eq:ineq_multi_hopf}) and inequality~(\ref{eq:h_ineq}) are
equivalent:
\begin{remark}
  First recall that the $h_i$ have to satisfy (\ref{eq:ki_hi_lam})
  (as ($\kappa$,$\frac{1}{h}$) is a solution to the steady state 
  equation~(\ref{eq:def_ss})). That is, $h_7$, \ldots, $h_{10}$ can
  be represented in terms of $\kappa_3$, $\kappa_6$, $\kappa_9$ and
  $\kappa_{12}$ (and $\lambda$) as  
  \begin{displaymath}
    h_{7} =  \frac{\kappa_3}{\lambda},
    h_{8} =  \frac{\kappa_6}{\lambda},
    h_{9} =  \frac{\kappa_{12}}{\lambda},
    h_{10} = \frac{\kappa_9}{\lambda}\ .
  \end{displaymath}
  Using this one obtains for the left hand side of
  inequality~(\ref{eq:h_ineq}):
  \begin{displaymath}
    h_7 h_{10} - h_8 h_9 = \frac{1}{\lambda^2}\left( \kappa_3 \kappa_9 - \kappa_6
      \kappa_{12} \right)\ . 
  \end{displaymath}
  As $\lambda^2>0$ the inequality (\ref{eq:h_ineq}) is equivalent
  \begin{displaymath}
    \kappa_3 \kappa_9 - \kappa_6  \kappa_{12} < 0\ .
  \end{displaymath}
\end{remark}

Finally, we
make several remarks  regarding the stability of the
positive steady state $x^*(t)=\frac{1}{h(t)}$ of the
system~(\ref{eq:sys-cyc-1}) -- (\ref{eq:sys-cyc-10})
depending on $t>0$.
 
\begin{remark}\label{rem:stability}
  \begin{enumerate}[{(}a{)}]
  \item
    For sufficiently small $t>0$, the
    positive steady state $x^*(t)$ is asymptotically stable. (This is
    true regardless of inequality~(\ref{eq:h_ineq})).
  \item
    Suppose that  the inequality  (\ref{eq:h_ineq}) is
    satisfied. Then for sufficiently  large $t >0$,  $x^*(t)$ is
    unstable. 
  \end{enumerate}
\end{remark}

\subsection{A procedure to locate simple Hopf bifurcations in
  network~(\ref{net:cyc_simple})}
\label{sec:procedure}

Following the proof of Lemma~\ref{lem:existence_hopf} we proceed as
follows to find points that satisfy (\ref{eq:Hurwitz-Ineqs}):
\begin{enumerate}[{Step~}1{:}]
\item\label{item:Step1} Choose positive values for $h_{7}$, $h_{8}$,
  $h_{9}$, $h_{10}$ such that (\ref{eq:h_ineq}) is satisfied (e.g.\
  $h_{7}=h_{8}=h_{10}=1$ and $h_{9}=2$).
\item Choose positive values for $h_4$ and $h_5$ (e.g. $h_4 = h_5 =1$).
\item In $p(t)$ set $h_{1}=h_{2}=h_{3}=h_{6}= t$. For the values
  chosen so far we obtain
  \begin{displaymath}
    \begin{split}
      & 497664 + 10946304 t + 103721056 t^2 + 579850652 t^3 + 
      2169242876 t^4 \\
      &
      + 5787611019 t^5 + 11398671182 t^6 + 
      16865933820 t^7 + 18863357157 t^8  \\
      &
      + 15900121640 t^9 + 
      9989687485 t^{10} + 4589099030 t^{11} + 1497364081 t^{12}  \\
      &
      + 
      331280824 t^{13} + 45135703 t^{14} + 2794428 t^{15} \\
      & - 85122
      t^{16} - 20304 t^{17} - 648 t^{18} = 0
    \end{split}
  \end{displaymath}
\item Approximate the positive real root(s) of $p(t)=0$. For the values
  chosen so far we obtain $t^*\approx 14.874$.
\item\label{item:before_last}
  Check that the point generated so far
  satisfies~(\ref{eq:Hurwitz-derivative}). In the example we obtain 
  \begin{displaymath}
    h^*=(14.874,14.874,14.874,1,1,14.874,1,1,2,1)^T. 
  \end{displaymath}
  We use \texttt{Matcont} to verify a Hopf bifurcation,
  cf.~Fig.~\ref{fig:S11_vs_c1}. 
\item\label{item:last}
  Choose some $t>t^*$, compute vectors $h$ and
  $x=\frac{1}{h}$ and use eq.~(\ref{eq:def_k}) and~(\ref{eq:con_rel})
  to obtain rate constants and total concentrations.
  We have chosen $t=15$ and hence obtain
  \begin{equation}
    \begin{split}
      \label{eq:h_x0_exa}
      &h^T =(15,15,15,1,1,15,1,1,2,1) \text{ and } \\
      &x^T =
    \left(\frac{1}{15},\frac{1}{15},\frac{1}{15},1,1,\frac{1}{15},1,1,\frac{1}{2},
      1 \right)
  \end{split}
\end{equation}
  and the rate constants and total concentrations given in
  Table~\ref{tab:rc_cyc_exa}.
  \begin{table}[!h]
    \centering
    \begin{tabular}{|c|c|c|c|c|c|c|c||c|c|c|} \hline
      $\kappa_1$ & $\kappa_3$ & $\kappa_4$ & $\kappa_6$ & $\kappa_7$
      & $\kappa_9$ & $\kappa_{10}$ & $\kappa_{12}$ & $c_1$ & $c_2$ & $c_3$ \\ \hline 
      225 $\lambda$ & $\lambda$ & 15 $\lambda$ & $\lambda$ & 225 $\lambda$
      & $\lambda$ & 15 $\lambda$ & 2 $\lambda$ & $\frac{31}{15}$
      & $\frac{47}{30}$ & $\frac{169}{30}$ \\ \hline
    \end{tabular}
    \caption{
      \label{tab:rc_cyc_exa}
      Rate constants and total concentrations obtained
      by solving (\ref{eq:diag_k_phi_E_lam}) for~$k$.
    }
  \end{table}
\end{enumerate}

\begin{remark}
  \label{rem:candidate-Hopf-points}
  As a consequence of Lemma~\ref{lem:existence_hopf}, any point $x^*$
  obtained via Steps~\ref{item:Step1} to~\ref{item:before_last} is a {\em
    candidate} Hopf point. It is guaranteed to
  satisfy~(\ref{eq:Hurwitz-Ineqs}), but a simple Hopf bifurcation only
  occurs if the condition~(\ref{eq:Hurwitz-derivative}) is satisfied as
  well. In practice we suggest the following approach: first determine
  a candidate point $x^*$ via Steps~\ref{item:Step1}
  to~\ref{item:before_last}, second use (\ref{eq:ki_hi_lam}) and
  (\ref{eq:con_rel}) to determine the corresponding rate constants and
  total concentrations and third verify the existence of a simple Hopf
  bifurcation by using a numerical continuation software like MATCONT
  \cite{Kuznetsov2003} to vary a rate constant or total concentration
  at this point. 
\end{remark}

\begin{figure}[!h]
  \centering
  \begin{subfigure}{0.3\textwidth}
    \includegraphics[width=0.9\linewidth]{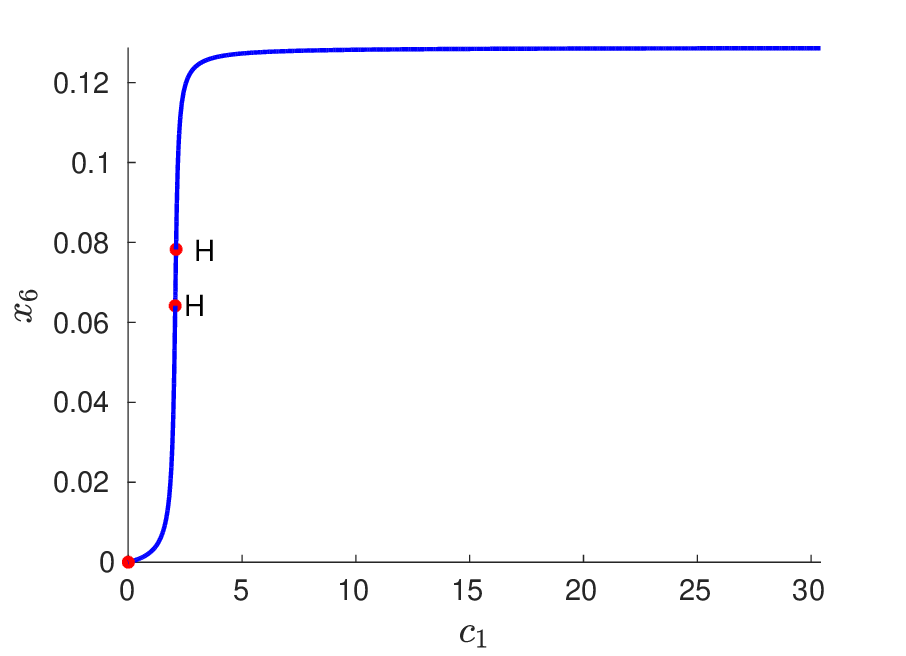}
    \subcaption{
      \label{fig:S11_vs_c1}
      $x_6$ vs.\ $c_1$
    }
  \end{subfigure}
  \begin{subfigure}{0.3\textwidth}
    \includegraphics[width=0.9\linewidth]{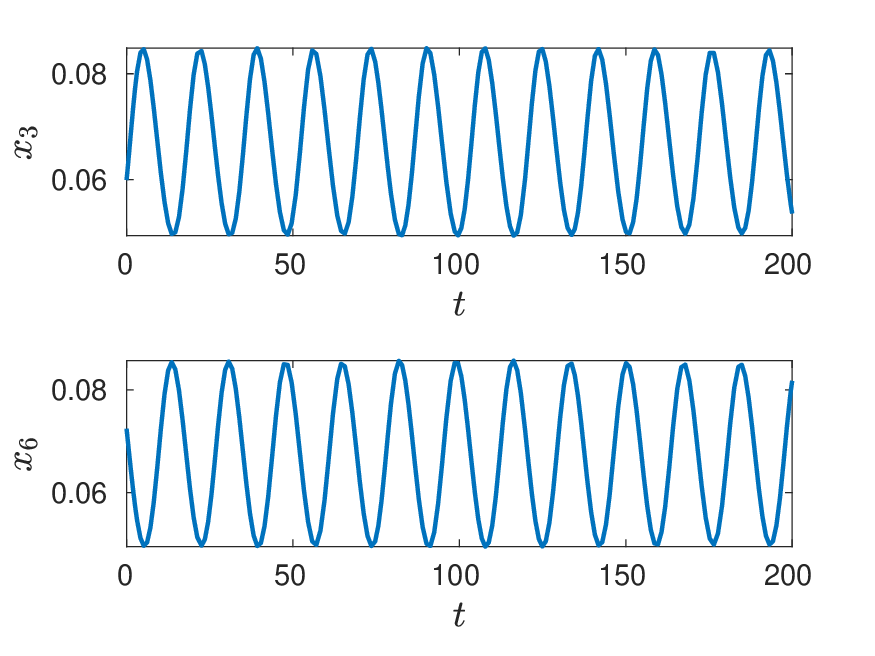}
    \subcaption{
      \label{fig:Sxx_vs_t}
      $x_3$ and $x_6$ vs.\ $t$
    }
  \end{subfigure}
  \begin{subfigure}{0.3\textwidth}
    \includegraphics[width=0.9\linewidth]{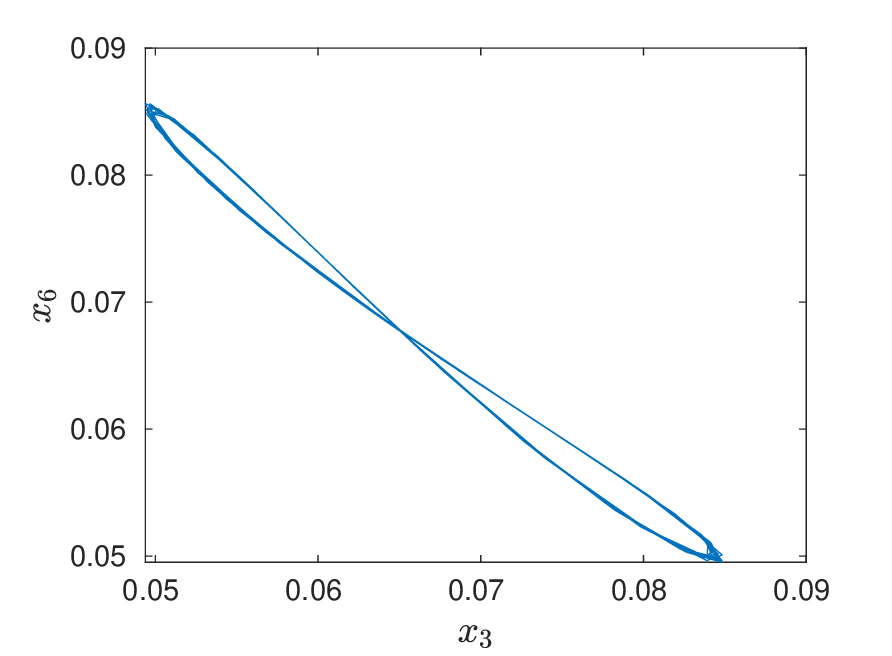}
    \subcaption{
      \label{fig:S11_vs_S00}
      $x_6$ vs.\ $x_3$
    }
  \end{subfigure}
  
  \caption{
    \label{fig:LC_small}
    Numerical verification of Hopf bifurcations
    (panel~(\subref{fig:S11_vs_c1}), labeled~$H$) and a limit 
    cycle (panel~(\subref{fig:Sxx_vs_t})
    and~(\subref{fig:S11_vs_S00})). Rate constants~$k$ as in
    Table~\ref{tab:rc_cyc_exa} with $\lambda=1$. Initial value
    $x(0)$ as in~(\ref{eq:h_x0_exa}) -- apart from $x_3(0)$ and
    $x_6(0)$: to obtain an initial value near the steady state $x$
    given in~(\ref{eq:h_x0_exa}) we choose $x_3(0) =
    1.1\cdot\frac{1}{15}$ and $x_6(0)=0.9\cdot\frac{1}{15}$).
  } 
\end{figure}

\begin{remark}
  \label{rem:cyc_k_fixed}
  In the course of Steps~\ref{item:Step1} -- \ref{item:last} above all rate
  constants are fixed:
  \begin{enumerate}[{(}i{)}]
  \item From~(\ref{eq:def_k}) one obtains for
    network~(\ref{net:cyc_simple}) the relation~(\ref{eq:ki_hi_lam})
    between the~$\kappa_i$ and the~$h_i$.
  \item Thus, choosing numerical values for $h_7$, \ldots, $h_{10}$ is
    equivalent to choosing $\kappa_3$, $\kappa_6$, $\kappa_9$ and $\kappa_{12}$ (up to the
    factor $\lambda$).
  \item Choosing numerical values for $h_4$ and $h_5$ and assigning
    $h_1=h_2=h_3=h_6=t$ is equivalent to choosing $\kappa_1$, $\kappa_4$, $\kappa_7$
    and $\kappa_{10}$ (again up to the factor $\lambda$). In this case $\kappa_1$
    and $\kappa_7$ are proportional to $t^2$ and $\kappa_4$ and $\kappa_{10}$ to $t$ --
    as $h_4$ and $h_5$ are fixed to numerical values. 
  \item For $h_7$, \ldots, $h_{10}$ chosen in Step~\ref{item:Step1}
    and $\lambda=1$ one thus obtains 
    $k=\left(t^2, 1, t, 1, t^2, 1, t, 2
    \right)$.
  \end{enumerate}
\end{remark}
 
\subsection{Lifting to the full network~(\ref{eq:MA_cyc})
\label{sec:lifting}}

In \cite{Banaji2018} network modifications are described that preserve the
existence of a stable positive limit cycle. This is the basis for
the following result:
\begin{theorem} 
  \label{theo:main-cyc-full}
  Consider networks~(\ref{eq:MA_cyc})
  and~(\ref{net:cyc_simple}). Assume 
  the rate constant values 
  of network~(\ref{net:cyc_simple}) are such that the
  system~(\ref{eq:sys-cyc-1}) -- (\ref{eq:sys-cyc-10}) admits a stable
  limit cycle.
  Choose these values for the rate constants of
  network~(\ref{eq:MA_cyc}).
  For values of $\kappa_2$, $\kappa_5$, $\kappa_8$, $\kappa_{11}$ 
  small enough, there exists a stable limit cycle in the  
  system~(\ref{eq:sys_cyc_full_1}) -- (\ref{eq:sys_cyc_full_10}) close
  to the limit cycle of the system~(\ref{eq:sys-cyc-1}) --
  (\ref{eq:sys-cyc-10}).
\end{theorem}

\begin{proof}
In the language of \cite{Banaji2018}, if, a network that admits a
stable positive limit cycle (for some values of the rate constants
and initial conditions) is modified by adding reactions that are in
the span of the stoichiometric matrix, then the new network also
admits a stable positive limit cycle~\cite[Theorem~1]{Banaji2018}
(if the rate constants of the new reactions are chosen
appropriately). 

As the reaction vectors of reversible reactions are in the span of the
stoichiometric matrix, the existence of a limit cycle in
network~(\ref{net:cyc_simple}) implies the existence of a limit cycle
in the full network~(\ref{eq:MA_cyc}) for
appropriately chosen rate constants of the backward reactions.
\end{proof}

To illustrate Theorem~\ref{theo:main-cyc-full}, we use the ODEs 
(\ref{eq:sys_cyc_full_1}) -- (\ref{eq:sys_cyc_full_10}) and the values
of Table~\ref{tab:rc_cyc_exa} (on page~\pageref{tab:rc_cyc_exa}).
Fig.~\ref{fig:lifting} demonstrates
the existence of a limit cycle in the full system
(\ref{eq:sys_cyc_full_1}) -- (\ref{eq:sys_cyc_full_10})(for  $k_b$
sufficiently small).

\begin{remark}[Locating the limit cycle in Fig.~\ref{fig:small_kb}]
  For simplicity we choose $\kappa_2 = \kappa_5 = \kappa_8 =
  \kappa_{11} = k_b$.
  To obtain Fig.~\ref{fig:lifting} the initial value given in
  the table of Fig.~\ref{tab:ini_on_orbit} was used. This point is \lq
  close\rq{} to the limit cycle of the ODEs defined by 
  network~(\ref{net:cyc_simple}) (i.e.\ for $k_b=0$). It was obtained by 
  solving the ODEs~(\ref{eq:sys_cyc_full_1}) --
  (\ref{eq:sys_cyc_full_10}) using Matlab's \texttt{ode15s} with
  initial value given in Table~\ref{tab:rc_cyc_exa} (on
  page~\pageref{tab:rc_cyc_exa}) for a \lq long\rq{} time (i.e.\ until
  $T=5000$) and $\lambda=1$. The point in the table of
  Fig.~\ref{tab:ini_on_orbit} corresponds to the last point of that
  first simulation.
\end{remark}

\begin{figure}[!h]
  \centering
  \begin{subfigure}{0.45\textwidth}
    \includegraphics[width=0.95\textwidth]{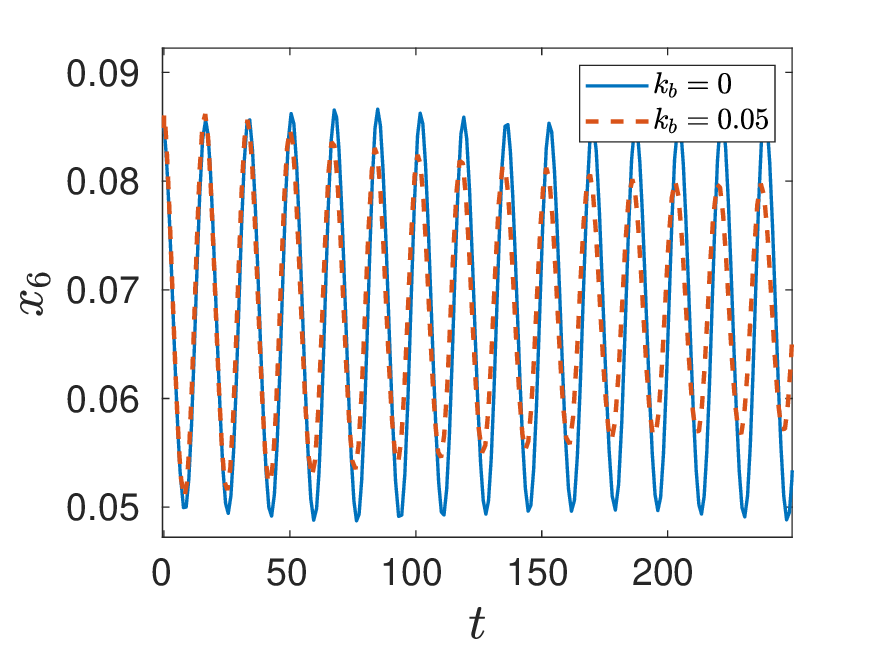}
    \subcaption{
      \label{fig:small_kb}
       $x_6$ vs.\ $t$ for $k_b=0$, $0.05$.
    }
  \end{subfigure}
  \hfill
  \begin{subfigure}{0.45\textwidth}
    \includegraphics[width=0.95\textwidth]{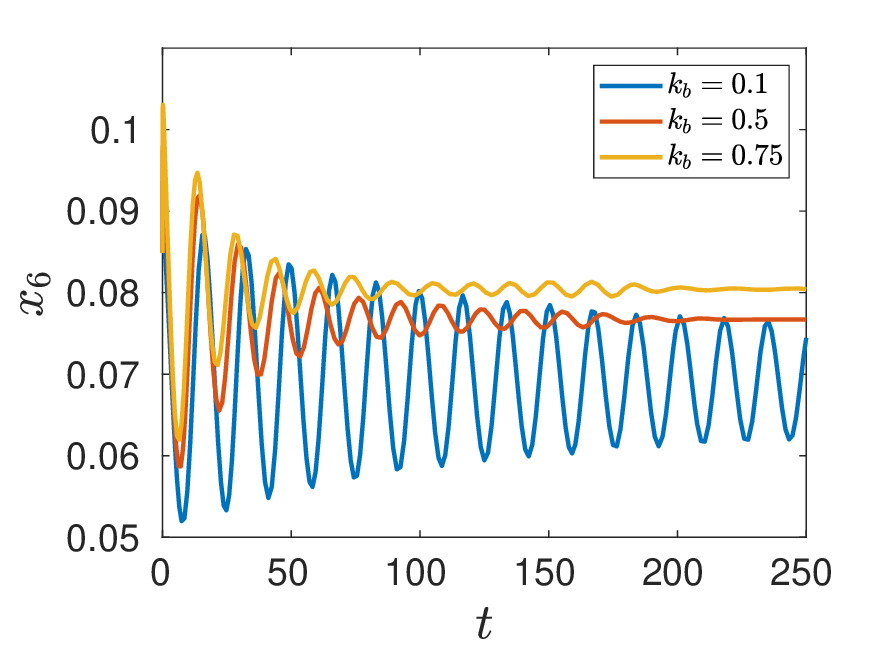}
    \subcaption{
      \label{fig:larger_kb}
      $x_6$ vs.\ $t$ for $k_b=0.1$, $0.5$, $0.75$.
    }
  \end{subfigure}

  \vspace{2ex}
  \begin{subfigure}{0.95\textwidth}
    \begin{center}
      \tiny
      \begin{tabular}{|c|c|c|c|c|c|c|c|c|c|}\hline
        $x_1$ & $x_2$ &  $x_3$ & $x_4$ & $x_5$ & $x_6$ & $x_7$
        & $x_8$ & $x_9$ & $x_{10}$ \\ \hline 
        0.0885267 & 0.0528367 & 0.0496013 & 0.74587 &1.261 & 0.084892
        & 0.97124 & 1.0069 & 0.49383 & 1.02 \\ \hline
      \end{tabular}
      \subcaption{
        \label{tab:ini_on_orbit}
        Initial value for simulations displayed in
        Fig.~\ref{fig:lifting}.
      }
    \end{center}
  \end{subfigure}
  
  \caption{\label{fig:lifting} Simulation of network~(\ref{eq:MA_cyc})
    for $\kappa_2 = \kappa_5 = \kappa_8 = \kappa_{11} = k_b$ and
    different values~$k_b$ (ODEs have been solved with \texttt{ode15s}
    (Mathworks) for $x(0)$  as in the  table of
    panel~(\subref{tab:ini_on_orbit}) and  $\kappa_i$, $c_i$ as in  
    Table~\ref{tab:rc_cyc_exa} (on page~\pageref{tab:rc_cyc_exa} with
    $\lambda=1$). Panel~(\subref{fig:small_kb}): $k_b=0$ corresponds 
    to the ODEs derived from  network~(\ref{net:cyc_simple}),
    $k_b=0.05$ to the ODES~(\ref{eq:sys_cyc_full_1}) --
    (\ref{eq:sys_cyc_full_10})
    for $k_b=0.05$. The oscillations indicate 
    for $k_b=0.05$ a stable limit cycle close to the stable
    limit cycle for $k_b=0$. Panel~(\subref{fig:larger_kb}): the
    stable limit cycle does not seem to exist for larger values
    of~$k_b$.
  }
\end{figure}

\newpage
\section{Discussion}
\label{sec:discussion}

In this section we discuss inequality~(\ref{eq:ineq_multi_hopf})
in the light of results on multistationarity for a network of
sequential distributive double phosphorylation described in
\cite{maya-bistab}. In Section~\ref{sec:cyc_vs_seq} we introduce
the corresponding reaction network and compare it to
network~(\ref{eq:MA_cyc}). In Section~\ref{sec:cycl-dist-osci} we
briefly summarize the results presented in
Section~\ref{sec:analysis-network-cyc-simple} and in 
Section~\ref{sec:sequ-distr-bistab} the multistationarity results of
\cite{maya-bistab}. We close by arguing our conclusion that in
distributive double phosphorylation the catalytic constants enable
non-trivial dynamics in Section~\ref{sec:cat-const-non-triv}.

\subsection{Cyclic versus sequential distributive double
  phosphorylation}
\label{sec:cyc_vs_seq}

Sequential and distributive double phosphorylation can be described
by the following mass action network (cf.\ e.g.\ \cite{Holstein2013}
or~\cite{maya-bistab}): 
\begin{equation}
  \label{eq:MA-seq}
  \begin{split}
    S_0 + K \ce{<=>[\kappa_1][\kappa_2]} KS_0 \ce{->[\kappa_3]} S_1+K
    \ce{<=>[\kappa_4][\kappa_4]} KS_1 \ce{->[\kappa_6]} S_2+K \\
    S_2 + F  \ce{<=>[\kappa_7][\kappa_8]} FS_2 \ce{->[\kappa_9]} S_1+F
    \ce{<=>[\kappa_{10}][\kappa_{11}]} FS_1 \ce{->[\kappa_{12}]} S_0+F.
  \end{split}
\end{equation}
Network~(\ref{eq:MA-seq}) is structurally similar to
network~(\ref{eq:MA_cyc}): both networks contain 12 reactions and
the only difference is that network~(\ref{eq:MA_cyc}) contains
two species of mono-phosphorylated protein ($S_{10}$ and $S_{01}$),
while network(\ref{eq:MA-seq}) contains only one ($S_1$). Hence
network~(\ref{eq:MA-seq}) contains nine species, while
network network~(\ref{eq:MA_cyc}) contains ten.

In particular, networks~(\ref{eq:MA_cyc}) and~(\ref{eq:MA-seq})
contain the same four phosphorylation events: (i) the conversion of
unphosphorylated protein to mono-phosphorylated protein catalyzed be
the kinase $K$, (ii) the conversion of mono-phosphorylated protein
to double-phosphorylated protein catalyzed by the same
kinase $K$, (iii) the conversion of double-phosphorylated protein to
mono-phosphorylated protein catalyzed by the phosphatase $F$ and
(iv) the conversion of mono-phosphorylated protein to
unphosphorylated protein catalyzed by the same phosphatase $F$. As 
described in~\cite{maya-bistab}, in enzyme kinetics research it is
customary to characterize such  phosphorylation events by three
constants, the Michaelis constant~($K_m$), the catalytic
constant~($k_c$) and the equilibrium constant $k_{eq}$ of the
respective enzyme substrate pair (see, for
example,~\cite{Bowden2004} for details on enzyme kinetics). 

Of particular interest in the context of the present publication
are the $k_c$-values as these correspond to the rate constants
involved in inequality~(\ref{eq:ineq_multi_hopf}): $\kappa_3$ is the
$k_c$-value of the kinase $K$ with unphosphorylated substrate
($S_{00}$ or $S_0$), $\kappa_6$ of $K$ with mono-phosphorylated 
substrate ($S_{10}$ or $S_1$), $\kappa_9$ of $F$ with
double-phosphorylated substrate ($S_{11}$ or $S_2$) and
$\kappa_{12}$ of $F$ with mono-phosphorylated substrate ($S_{10}$ or
$S_1$).

\subsection{Cyclic and distributive: emergence of oscillations} 
\label{sec:cycl-dist-osci}

By Theorem~\ref{thm:stab}, if these catalytic constants satisfy
inequality~(\ref{eq:ineq_multi_hopf}), then there exists positive
steady states of network~(\ref{net:cyc_simple}) such that the
Jacobian has a complex-conjugate pair of eigenvalues on the
imaginary axis. This is necessary for a simple Hopf-bifurcation. If
there is a supercritical simple Hopf bifurcation and a stable limit
cycle emerges, then by Theorem~\ref{theo:main-cyc-full} there is a
stable limit cycle in network~(\ref{eq:MA_cyc}). Hence we say that
for cyclic and distributive double phosphorylation the catalytic
constants enable the emergence of oscillations.

\subsection{Sequential and distributive: emergence of bistability}
\label{sec:sequ-distr-bistab}

In \cite{maya-bistab} we have shown that
the  inequality~(\ref{eq:ineq_multi_hopf}) is sufficient for
multistationarity in network~(\ref{eq:MA-seq}).
To be more precise, by \cite[Theorem~5.1]{maya-bistab}, if the
catalytic constants satisfy inequality~(\ref{eq:ineq_multi_hopf}),
then there exists values of the total concentrations of kinase,
phosphatase and protein such that network~(\ref{eq:MA-seq}) has
three positive steady states -- no matter what values the other rate
constants take. As multistationarity is necessary for bistability,
we say in \cite{maya-bistab}, that the catalytic constants enable
the emergence of bistability in sequential and distributive double
phosphorylation.

\subsection{Catalytic constants and non-trivial dynamics}
\label{sec:cat-const-non-triv}

In the previous subsections we have described how the catalytic
constants of cyclic distributive double phosphorylation enable the
emergence of oscillations, and how the catalytic constants of
sequential distributive double phosphorylation enable the emergence
of bistability. Hence we conclude that in distributive double
phosphorylation the catalytic constants enable non-trivial
dynamics.

As a consequence, if the rate constant are chosen according to the
procedure of Section~\ref{sec:procedure} and
Theorem~\ref{theo:main-cyc-full} and network~(\ref{eq:MA_cyc})
admits a stable limit cycle for these rate constants, then
network~(\ref{eq:MA-seq})  taken with the {\em same rate 
  constant values} will show multistationarity -- for some, usually
different, value of the total concentrations. That is, if the
catalytic constants satisfy~(\ref{eq:ineq_multi_hopf}), then a {\em
  cyclic mechanism}  can show {\em sustained oscillations}, while a
{\em sequential mechanism} equipped with the same rate constant
values can show {\em bistability}. 

As an example the procedure described in Section~\ref{sec:procedure}
together with Theorem~\ref{theo:main-cyc-full} have been used to
obtain the following rate constant values:
\begin{align}
  \notag
  \kappa_{1} &= 49 &
    \kappa_{2} &= \frac{1}{10} &
    \kappa_{3} &= \frac{1}{2} &
    \kappa_{4} &= 7 \\
    \label{eq:exa_k_seq_bistab_cyc_osci}
    \kappa_{5} &= \frac{1}{10} &
    \kappa_{6} &=2 &
    \kappa_{7} &=49 &
    \kappa_{8} &=\frac{1}{10} \\
    \notag
    \kappa_{9} &=\frac{1}{4} &
    \kappa_{10} &=7 &
    \kappa_{11}&=\frac{1}{10} &
    \kappa_{12}&=\frac{3}{4}
\end{align}

These values satisfy inequality~(\ref{eq:ineq_multi_hopf}). Using
these values in the ODEs derived from network~(\ref{eq:MA_cyc}), one
detects simple Hopf bifurcations and oscillations as depicted in
Fig.~\ref{fig:cyc_S11_Ktot} and Fig.~\ref{fig:cyc_S00_S11_t}. And
using these values in the ODEs derived from network~(\ref{eq:MA-seq}), one
obtains multistationarity as depicted in Fig.~\ref{fig:seq_S2_Ktot}.
To create these figures, the same parameter values have been
used in both ODE systems, albeit for different values of the total
concentrations. For Fig.~\ref{fig:cyc_S11_Ktot} and
Fig.~\ref{fig:cyc_S00_S11_t} the procedure of
Section~\ref{sec:procedure} yields 
\begin{equation}
  \label{fig:tab_ci_cyc}
  c_1 = \frac{37}{14},\;  c_2 =\frac{115}{21} \text{ and }
  c_3=\frac{425}{42},
\end{equation}
where $c_1$ denotes total amount of kinase~$K$, $c_2$ of
phosphatase~$F$ and $c_3$ of substrate~$S$.
And for Fig.~\ref{fig:seq_S2_Ktot} following the results of
\cite{maya-bistab} yields (using the same notation for the total
concentrations) 
\begin{equation}
  \label{fig:tab_ci_seq}
  c_1=\frac{307}{27},\;  c_2=\frac{650}{27} \text{ and }
  c_3=\frac{18539}{540}\ .
\end{equation}

\begin{figure}[!h]
  \centering

  \begin{subfigure}{0.3\textwidth}
    \centering
    \includegraphics[width=0.95\textwidth]{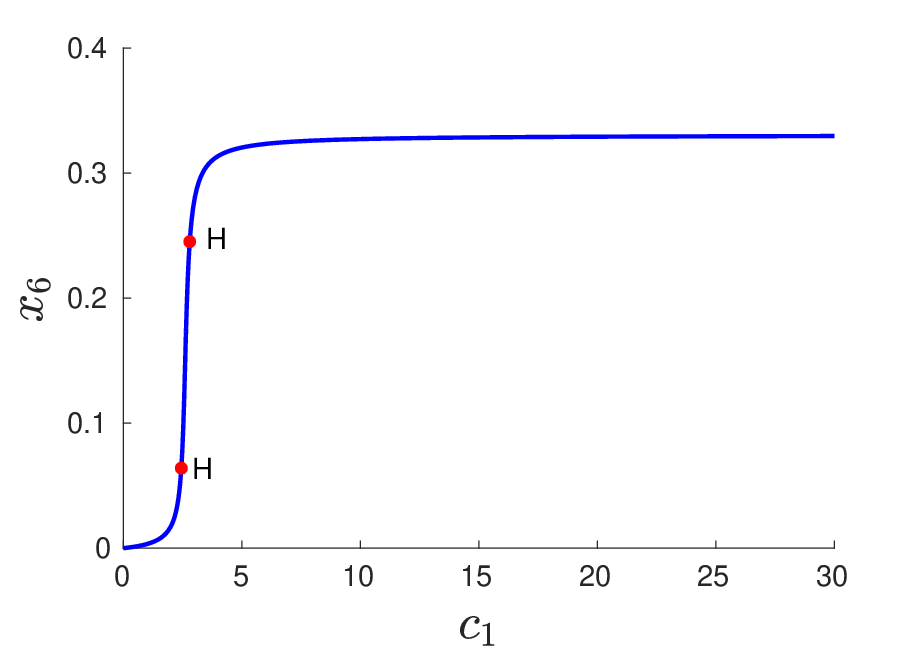}
    \subcaption{
      \label{fig:cyc_S11_Ktot}
      Cyclic: numerical continuation of steady states. Steady state
      value of $S_{11}$ ($x_6$) vs.\ total concentration of Kinase $K$
      ($c_1$); Hopf points (H).
    }
  \end{subfigure}
  \begin{subfigure}{0.3\textwidth}
    \centering
    \includegraphics[width=0.95\textwidth]{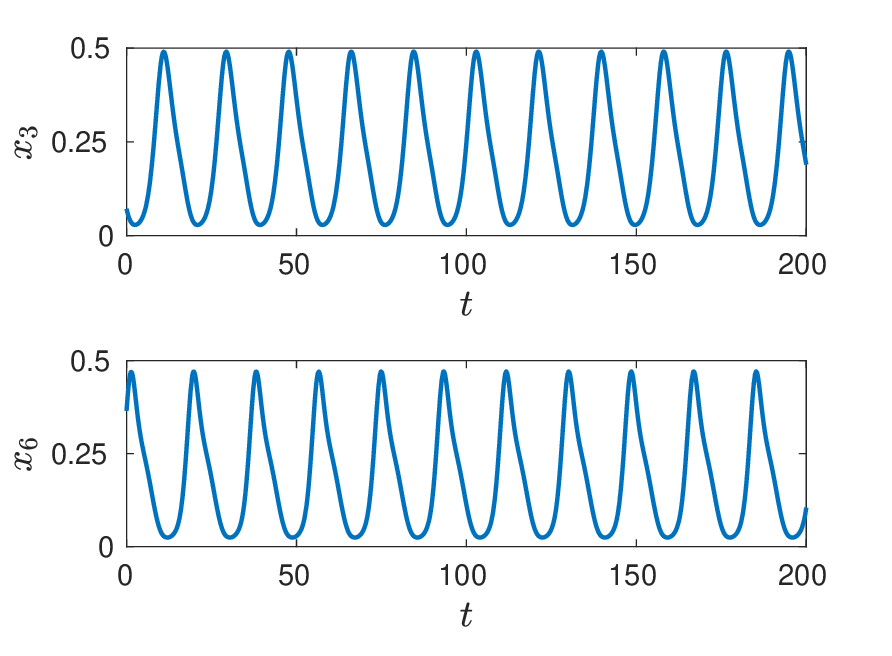}
    \subcaption{
      \label{fig:cyc_S00_S11_t}
      Cyclic: numerical solutions of ODEs defined by
      network~(\ref{eq:MA_cyc}); plotting $S_{00}$ ($x_3$) and
      $S_{11}$ ($x_6$) vs.\ time $t$ shows sustained oscillations.
    }
  \end{subfigure}
  \begin{subfigure}{0.3\textwidth}
    \centering
    \includegraphics[width=0.95\textwidth]{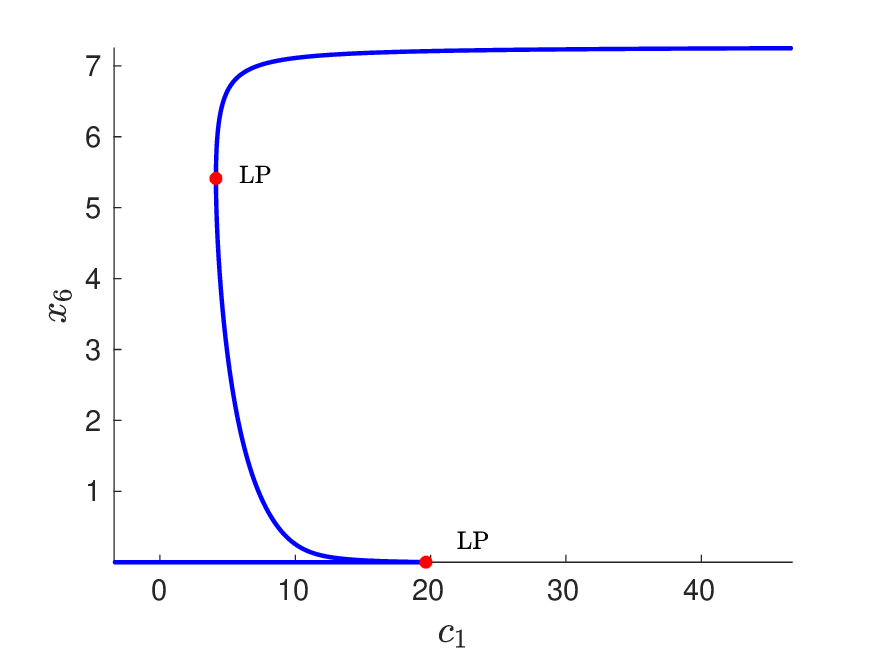}
    \subcaption{
      \label{fig:seq_S2_Ktot}
      Sequential: numerical continuation of steady states; 
      $S_{2}$ ($x_6$) vs.\ $c_1$; LP limit point, multistationarity
      for $c_1$ between LPs.
    }
  \end{subfigure}
  \caption{Oscillations and multistationarity in distributive
    phosphorylation. Panel~(a)~\&~(b) Hopf bifurcations (H) and
    sustained oscillations in network~(\ref{eq:MA_cyc}); panel~(c)
    multistationarity in network~(\ref{eq:MA-seq}). Rate constants
    for both networks as in (\ref{eq:exa_k_seq_bistab_cyc_osci}), total
    concentrations for network~(\ref{eq:MA_cyc})
    in eq.~(\ref{fig:tab_ci_cyc}) and for network~(\ref{eq:MA-seq})
    in eq.~(\ref{fig:tab_ci_seq}).
  }
  \label{fig:Intro_Exa_Cyc_Numerics}
\end{figure}

\section{Data availability}

Data sharing not applicable to this article as no datasets were generated or analyzed during the current study.


\appendix

\section{A Remark on matrices of the form $A=\lambda B$}
\label{App:Rem_A_lamB}

Throughout this section let $A, B\in\R^{n\times n}$ and
$\lambda$ be a nonzero real number such that
\begin{displaymath}
  A = \lambda B
\end{displaymath}
\begin{lemma}
  \label{lem:1}
  For matrices $A$, $B$  as above one has
  \begin{displaymath}
    \det(A) = \lambda^n \det(B)\ .
  \end{displaymath}
\end{lemma}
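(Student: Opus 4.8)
The plan is to reduce the statement to two facts the reader already knows: the multiplicativity of the determinant and its value on scalar matrices. First I would rewrite the hypothesis as $A = (\lambda I_n)\,B$, where $I_n$ is the $n\times n$ identity. Then, by multiplicativity of the determinant,
\[
  \det(A) = \det(\lambda I_n)\,\det(B),
\]
and since $\lambda I_n$ is diagonal with every diagonal entry equal to $\lambda$, one has $\det(\lambda I_n) = \lambda^n$. Combining these gives $\det(A) = \lambda^n \det(B)$, as claimed.

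If a self-contained argument is preferred over invoking multiplicativity, I would instead use multilinearity of $\det$ in the columns: writing $B$ in terms of its columns $b^{(1)}, \dots, b^{(n)}$, the hypothesis says the $j$-th column of $A$ equals $\lambda\, b^{(j)}$ for each $j$, and pulling the scalar $\lambda$ out of each of the $n$ columns in turn yields
\[
  \det(A) = \det\bigl[\lambda b^{(1)} \mid \cdots \mid \lambda b^{(n)}\bigr] = \lambda^n \det\bigl[b^{(1)} \mid \cdots \mid b^{(n)}\bigr] = \lambda^n \det(B).
\]
The same computation works row-wise and is perhaps the most elementary route.

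There is no real obstacle here; the only point requiring a moment's care is that the factor $\lambda$ must be extracted from all $n$ columns (respectively, that the identity matrix has size $n$), which is exactly what produces the exponent $n$. I would present the $A = (\lambda I_n)B$ factorization as the main proof, since it is a one-line consequence of standard facts. I note in passing that the hypothesis $\lambda \neq 0$ plays no role in this particular identity; it is retained because the subsequent results that build on Lemma~\ref{lem:1} (for instance Corollary~\ref{coro:cp_J_lamH} and Proposition~\ref{pro:hom}, where one divides by powers of $\lambda$) do need it.
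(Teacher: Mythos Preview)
Your proof is correct. The paper's own argument is slightly different in form: it first disposes of the case where $B$ is singular (so both determinants vanish) and then appeals to Laplace expansion of $\lambda B$ in the nonsingular case. Your route via $A = (\lambda I_n)B$ and multiplicativity, or equivalently via column multilinearity, is cleaner: it needs no case distinction and makes the origin of the exponent $n$ transparent in one line. The paper's case split is harmless but unnecessary, since multilinearity (or your factorization) handles singular and nonsingular $B$ uniformly. Your closing remark that $\lambda \neq 0$ is not actually used here is also correct and worth noting.
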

\begin{proof}
  If $B$ is singular, then $A$ is singular by construction and the
  result follows immediately. Otherwise the result follows by Laplace 
  expansion of $\lambda B$. 
\end{proof}
As in, for example, \cite{Horn2012}, for $A\in\R^{n\times n}$
let $E_k(A)$ denote the sum of the principal minors of size $k$
of the matrix $A$.
\begin{lemma}\label{lem:2}
  For matrices $A$, $B$ as above one has
  \begin{displaymath}
    E_k(A) = \lambda^k E_k(B)\ .
  \end{displaymath}
\end{lemma}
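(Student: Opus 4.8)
The plan is to reduce the statement about sums of principal minors to the already-established determinant identity of Lemma~\ref{lem:1}, applied one principal submatrix at a time. Recall that, by definition, $E_k(A) = \sum_{\alpha} \det\bigl(A[\alpha]\bigr)$, where the sum ranges over all index sets $\alpha \subseteq \{1,\dots,n\}$ with $|\alpha| = k$ and $A[\alpha]$ denotes the $k\times k$ principal submatrix of $A$ obtained by keeping the rows and columns indexed by $\alpha$; similarly for $B$.

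First I would observe that, since $A = \lambda B$ entrywise, for any such index set $\alpha$ the corresponding principal submatrices satisfy $A[\alpha] = \lambda\, B[\alpha]$: deleting rows and columns commutes with scalar multiplication. Next I would apply Lemma~\ref{lem:1} to the pair $A[\alpha]$, $B[\alpha]$ — these are $k\times k$ matrices, so the lemma (with $n$ replaced by $k$) gives $\det\bigl(A[\alpha]\bigr) = \lambda^k \det\bigl(B[\alpha]\bigr)$. Finally I would sum this identity over all $\alpha$ with $|\alpha| = k$; since the same index sets parametrize the summands for $A$ and for $B$, factoring out the common $\lambda^k$ yields
\begin{displaymath}
  E_k(A) = \sum_{\alpha} \det\bigl(A[\alpha]\bigr) = \sum_{\alpha} \lambda^k \det\bigl(B[\alpha]\bigr) = \lambda^k \sum_{\alpha} \det\bigl(B[\alpha]\bigr) = \lambda^k E_k(B)\ .
\end{displaymath}

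There is no real obstacle here; the argument is essentially bookkeeping. The only point that deserves a sentence of care is the bijection between the index sets appearing in the definition of $E_k(A)$ and those appearing in the definition of $E_k(B)$ — i.e.\ that a principal submatrix of $A$ and the correspondingly indexed principal submatrix of $B$ differ exactly by the scalar $\lambda$ — which is immediate from $A = \lambda B$. Alternatively, if one prefers to avoid citing Lemma~\ref{lem:1}, the scaling $\det(A[\alpha]) = \lambda^k \det(B[\alpha])$ can be seen directly from the multilinearity of the determinant in its $k$ rows, pulling one factor of $\lambda$ out of each row.
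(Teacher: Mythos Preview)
Your proof is correct and follows exactly the approach of the paper: apply Lemma~\ref{lem:1} to each $k\times k$ principal submatrix (which inherits the relation $A[\alpha]=\lambda B[\alpha]$) and sum over all $\alpha$. The paper's own proof is a one-line version of the same argument.
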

\begin{proof}
  This follows from Lemma~\ref{lem:1} and the fact that $E_k(A)$ is a
  sum of determinants of $k\times k$-sub-matrices.
\end{proof}
Let $a_k$ denote the coefficients of the characteristic polynomial of
$A$ and $b_k$ those of the characteristic polynomial of~$B$.
\begin{lemma}
  \label{lem:3}
  For $A$, $B$ as above 
  \begin{displaymath}
    a_k = \lambda^k b_k\ .
  \end{displaymath}
\end{lemma}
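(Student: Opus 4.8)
The plan is to invoke the classical expansion of the characteristic polynomial as an alternating sum of the principal-minor sums $E_k$ and then apply Lemma~\ref{lem:2}. Recall that for any $M\in\R^{n\times n}$ one has
\[
\det(\mu I - M) \;=\; \sum_{k=0}^{n} (-1)^k E_k(M)\,\mu^{\,n-k},
\]
with the convention $E_0(M)=1$. In particular the coefficient of $\mu^{\,n-k}$ in the characteristic polynomial of $M$ is $(-1)^k E_k(M)$, so in the notation of the lemma $a_k=(-1)^k E_k(A)$ and $b_k=(-1)^k E_k(B)$.

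First I would record this expansion, citing \cite{Horn2012} (the source from which the notation $E_k$ is already borrowed), or else derive it in one line by expanding $\det(\mu I - M)$ via multilinearity of the determinant in its columns and collecting the terms in which $\mu$ appears exactly $n-k$ times: each such term is the determinant of a $k\times k$ principal submatrix of $-M$, i.e.\ contributes $(-1)^k$ times a $k\times k$ principal minor of $M$. Then the conclusion is immediate: by Lemma~\ref{lem:2}, $E_k(A)=\lambda^k E_k(B)$, hence
\[
a_k \;=\; (-1)^k E_k(A) \;=\; (-1)^k \lambda^k E_k(B) \;=\; \lambda^k\bigl((-1)^k E_k(B)\bigr) \;=\; \lambda^k b_k .
\]
The case $k=0$ reads $a_0=b_0=1$, consistent with both characteristic polynomials being monic.

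There is essentially no obstacle here: once Lemmas~\ref{lem:1} and~\ref{lem:2} are established, the identity $a_k=\lambda^k b_k$ is a bookkeeping statement. The only point that deserves a moment's care is the sign/normalization convention for the characteristic polynomial (whether one uses $\det(\mu I - M)$ or $\det(M-\mu I)$), but since the same convention is applied to both $A$ and $B$, the factor $(-1)^k$ is common to $a_k$ and $b_k$ and cancels, leaving exactly the factor $\lambda^k$ furnished by Lemma~\ref{lem:2}.
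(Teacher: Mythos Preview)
Your proposal is correct and follows essentially the same approach as the paper: both express the coefficients of the characteristic polynomial in terms of the principal-minor sums $E_k$ (citing \cite{Horn2012}) and then apply Lemma~\ref{lem:2}. Your remark about the sign convention is apt---the paper in fact writes the formula as $a_k=(-1)^{n-k}E_k(A)$, but as you note the common sign factor cancels and only the $\lambda^k$ from Lemma~\ref{lem:2} survives.
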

\begin{proof}
  This follows from Lemma~\ref{lem:2} and
  and the fact that the coefficients $a_k$ of any $n\times n$ matrix
  can be defined in terms of the sums of $k\times k$ principal minors
  of $A$ (cf.~\cite[eq.~(1.2.13)]{Horn2012}):
  \begin{displaymath}
    a_k=(-1)^{n-k} E_k(A)\ .
  \end{displaymath}
\end{proof}
And finally:
\begin{lemma}
  \label{lem:4}
  Let $A$, $B$ be as above and assume that
  $\rank(A)=\rank(B)=s<n$. Then the characteristic polynomial of $A$
  can be expressed in terms of the characteristic polynomial of $B$ by
  the following formula:
  \begin{align}
    \label{eq:formula_cp_A_lamB}
    \det(\mu I_n - \lambda A) &= \mu^{n-s} \left(
                                \mu^s + \lambda b_1 \mu^{s-1} + \ldots + \lambda^{s-1} b_{s-1}
                                \mu + \lambda^ s b_s 
                                \right)\\
    \label{eq:sum_formula_cp_A_lamB}
                              &= \mu^{n-s} \lambda^s
                                \left(
                                \left(\frac{\mu}{\lambda}\right)^s
                                + \sum_{i=1}^s \lambda^i b_i(h)
                                \left(\frac{\mu}{\lambda}\right)^{s-i} 
                                \right)
  \end{align}
\end{lemma}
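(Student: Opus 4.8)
The plan is to read off the whole statement from Lemma~\ref{lem:3} together with the rank hypothesis, with essentially no fresh computation. Write the characteristic polynomial of $A$ as $\det(\mu I_n - A) = \sum_{k=0}^n a_k\,\mu^{n-k}$ with $a_0 = 1$, and likewise $\det(\mu I_n - B) = \sum_{k=0}^n b_k\,\mu^{n-k}$ with $b_0 = 1$. The first step is to observe that the rank hypothesis forces the top $n-s$ coefficients to vanish: since $a_k = \pm E_k(A)$ — the identity already invoked in the proof of Lemma~\ref{lem:3} — and every $k\times k$ submatrix of $A$ has rank at most $\rank(A) = s$, a $k\times k$ principal minor with $k > s$ is the determinant of a singular matrix and hence $0$; thus $E_k(A) = 0$ and $a_k = 0$ for all $k > s$, and the same argument gives $b_k = 0$ for $k > s$. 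Consequently
\[
  \det(\mu I_n - A) \;=\; \mu^{n-s}\bigl(\mu^s + a_1\mu^{s-1} + \cdots + a_{s-1}\mu + a_s\bigr),
\]
and the factor $\mu^{n-s}$ is extracted cleanly exactly because $s < n$.

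The second step is to substitute the relation $a_k = \lambda^k b_k$ of Lemma~\ref{lem:3} (which also holds trivially for $k=0$) into the truncated bracket; this is precisely formula~\eqref{eq:formula_cp_A_lamB}. The third step is the purely algebraic rewriting yielding~\eqref{eq:sum_formula_cp_A_lamB}: using $\lambda^k \mu^{s-k} = \lambda^s (\mu/\lambda)^{s-k}$ for $k = 0,\dots,s$ (legitimate since $\lambda \neq 0$), one pulls a common factor $\lambda^s$ out of the degree-$s$ bracket, replacing every occurrence of $\mu$ by $\mu/\lambda$. I would also note in passing that, specialising $A = J_\lambda(h)$ and $B = J_1(h)$, this reproduces Corollary~\ref{coro:cp_J_lamH}, so the two corollaries quoted in the main text fall out as immediate consequences.

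I do not expect a genuine obstacle here; the only point deserving a sentence of care is the vanishing $E_k(A) = 0$ for $k > s$, which is where the hypothesis is used in the stronger form $\rank(A) = \rank(B) = s < n$ rather than merely $\rank(A) = \rank(B)$ — it is what guarantees that the two characteristic polynomials share the factor $\mu^{n-s}$ and are therefore comparable coefficient by coefficient via Lemma~\ref{lem:3}. Everything else is bookkeeping of powers of $\lambda$.
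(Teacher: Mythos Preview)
Your proposal is correct and follows essentially the same approach as the paper's own proof, which is extremely terse: it simply cites Lemma~\ref{lem:2} together with the Horn--Johnson identity $a_k=(-1)^{n-k}E_k(A)$ for~\eqref{eq:formula_cp_A_lamB}, and then factors $\lambda^s$ for~\eqref{eq:sum_formula_cp_A_lamB}. You spell out the one point the paper leaves implicit---that the rank hypothesis forces $E_k(A)=E_k(B)=0$ for $k>s$, so the factor $\mu^{n-s}$ can be extracted---and you route the substitution through Lemma~\ref{lem:3} rather than Lemma~\ref{lem:2}, but since Lemma~\ref{lem:3} is itself a one-line consequence of Lemma~\ref{lem:2}, the two arguments are the same in substance.
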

\begin{proof}
  Eq.~(\ref{eq:formula_cp_A_lamB}) follows from Lemma~\ref{lem:2} and
  \cite[eq.~(1.2.13)]{Horn2012}, eq.~(\ref{eq:sum_formula_cp_A_lamB})
  by factoring $\lambda^s$ (which is well defined as in our setting
  $\lambda\neq0$). 
\end{proof}

\section{Hurwitz determinants of $H(\lambda,h)$ and $G(h)$}
\label{sec:hur-hom}

Recall that $s =\rank (S)$, where $S$ is the $n \times m$
stoichiometric matrix. By  Corollary~\ref{coro:cp_J_lamH} we have
$a_i =\lambda^i b_i(h)$, $i=1,2, \ldots , s$.

We use the following formula for the determinant of a matrix $A$
with elements $a_{ij}$, $i$, $j=1$, \ldots $n$
proved in~\cite{maybee1989} 
\begin{equation}\label{eq-cyc-formula}
  \det A = \sum_{\sigma} (-1)^{n+f} A[c_1] \ldots A[c_f],
\end{equation}
where $c_1, \ldots, c_f$ are pairwise disjoint cycles of a permutation
$\sigma$. For a cycle  $c=(i_1, i_2, \ldots ,i_k)$ we have,
\begin{displaymath}
  A[c] = a_{i_2i_1} a_{i_3i_1} \ldots a_{i_1,i_k}\ .
\end{displaymath}
We define the differences between two consecutive indices in a cycle
$c=(i_1, \ldots , i_k)$:
\begin{definition}\label{def:vd}
  Let $c=(i_1 , i_2, \ldots, i_k)$ be a cycle. 
  We define the {\it (cycle)  differences } $d_{i_s i_{s+1}} =i_s
  -i_{s+1}$, $s=1, \ldots ,k$, where $i_{k+1}=i_1$ between any two
  consecutive indices of $c$.  
\end{definition}

The lemma below follows immediately by (\ref{eq-cyc-formula}) and the
differences' definition. In  (\ref{eq:det-c}) we state a different
formulation of the product $H_l[c]$, with the help of  the differences
of a cycle $c$, which is specific to Hurwitz matrices. The same lemma
applies to the Hurwitz matrices $G_l (h)$, $l=1,2, \ldots , s$. 
\begin{lemma}\label{lem:hc} 
  Let   $\det H_{l}$ be the Hurwitz determinant of $l$-th order,
  $l=1,2, \ldots , s$. Then  
  \begin{equation}\label{eq:det}
    \det H_{l} = \sum_{\sigma} (-1)^{l+f} H_{l} [c_1]\ldots H_{l} [c_f]
  \end{equation}
  where $\sigma$ is a permutation and $c_1, \ldots , c_f$ are the  set
  of corresponding pairwise disjoint cycles of $\sigma$. We have  for
  a cycle $c =(i_1,\ldots , i_k)$  and the corresponding product
  \begin{equation}\label{eq:det-c}
    H_{l}[c] =a_{2i_1-i_2} a_{2i_2-i_3} \ldots a_{2i_k-i_1}
    =a_{i_1+d_{i_1 i_2}}  a_{i_2+d_{i_2 i_3}}  \ldots a_{i_k+d_{i_k
        i_1}}\ . 
  \end{equation}
\end{lemma}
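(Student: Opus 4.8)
The plan is to obtain both displayed identities of Lemma~\ref{lem:hc} by specializing the cycle-sum determinant formula~(\ref{eq-cyc-formula}) of~\cite{maybee1989} to the square matrix $A=H_l$. First I would record the entry formula that comes from Definition~\ref{def:hurwitz}: the $(p,q)$-entry of $H_l$ equals $a_{2p-q}$ when $0\le 2p-q\le s$ and equals $0$ otherwise. Applying~(\ref{eq-cyc-formula}) with $n=l$ and $A=H_l$ then gives~(\ref{eq:det}) at once: the sum runs over all permutations $\sigma$ of $\{1,\dots,l\}$, written as their pairwise disjoint cycles $c_1,\dots,c_f$, with sign $(-1)^{l+f}$ and product $H_l[c_1]\cdots H_l[c_f]$.

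For~(\ref{eq:det-c}) I would fix a single cycle $c=(i_1,\dots,i_k)$ and substitute the entry formula into $H_l[c]$: the factor contributed by the step $i_s\to i_{s+1}$ of the cycle (indices read cyclically, $i_{k+1}=i_1$) is $(H_l)_{i_s i_{s+1}}=a_{2i_s-i_{s+1}}$, which yields the first equality $H_l[c]=a_{2i_1-i_2}\,a_{2i_2-i_3}\cdots a_{2i_k-i_1}$. The second equality is then just the rewriting $2i_s-i_{s+1}=i_s+(i_s-i_{s+1})=i_s+d_{i_s i_{s+1}}$, using the cycle differences of Definition~\ref{def:vd}. Both displayed equalities are therefore pure identities, requiring no estimates.

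I do not expect a genuine obstacle here; as the text just before the lemma says, the statement follows immediately from~(\ref{eq-cyc-formula}) and Definition~\ref{def:vd}. The only bookkeeping that needs attention is the consistency of the row/column and cycle-orientation conventions between~(\ref{eq-cyc-formula}) and Definition~\ref{def:hurwitz}: reversing the orientation of the cycles amounts to replacing $\sigma$ by $\sigma^{-1}$ in the sum, a bijection preserving both the number $f$ of cycles and the sign $(-1)^{l+f}$, so neither~(\ref{eq:det}) nor the stated form of $H_l[c]$ changes. One should also keep in mind the standing convention that $a_m=0$ for $m\notin\{0,1,\dots,s\}$, which is already encoded in the Hurwitz-matrix entries and makes any cycle producing an out-of-range subscript contribute nothing. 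Finally, since the argument uses nothing about $H_l$ beyond the entry formula $(H_l)_{pq}=a_{2p-q}$, the identical reasoning applies to $G_l(h)$ after replacing each $a_i$ by $b_i(h)$, which is the last assertion stated before the lemma.
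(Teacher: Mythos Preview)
Your proposal is correct and follows exactly the approach the paper indicates: the text preceding Lemma~\ref{lem:hc} states that it follows immediately from the cycle formula~(\ref{eq-cyc-formula}) and Definition~\ref{def:vd}, and you have simply spelled out that specialization, including the bookkeeping about cycle orientation. There is nothing to add.
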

The next lemma follows immediately by Definition~\ref{def:vd}.
\begin{lemma}\label{ldp}
Let $c=(i_1, i_2, \ldots ,  i_k)$ be a cycle.
For any cycle $c$, the sum of its differences is zero, 
$d_{i_1 i_{2}} +d_{i_2 i_{3}} +\ldots  +d_{i_k i_{1}} =0\, . $
\end{lemma}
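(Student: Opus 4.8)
The plan is to expand the left-hand side using Definition~\ref{def:vd} and observe that it is a telescoping sum. Substituting $d_{i_s i_{s+1}} = i_s - i_{s+1}$ for $s = 1, \ldots, k$, with the cyclic convention $i_{k+1} = i_1$, the expression becomes
\[
  d_{i_1 i_2} + d_{i_2 i_3} + \cdots + d_{i_k i_1}
  = (i_1 - i_2) + (i_2 - i_3) + \cdots + (i_{k-1} - i_k) + (i_k - i_1).
\]
First I would note that each index $i_s$ appears in this sum exactly twice: once with a positive sign (as the first entry of $d_{i_s i_{s+1}}$) and once with a negative sign (as the second entry of $d_{i_{s-1} i_s}$, reading indices cyclically modulo $k$). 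Hence all contributions cancel pairwise and the total is $0$. Equivalently, and this is the form I would actually record, one reads the sum as a telescoping series: the intermediate indices $i_2, \ldots, i_k$ cancel between consecutive terms, leaving only $i_1 - i_1 = 0$.

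There is essentially no obstacle here. The single point meriting a word of care is the cyclic convention $i_{k+1} = i_1$ from Definition~\ref{def:vd}: it is precisely the closing difference $d_{i_k i_1} = i_k - i_1$ that supplies the $-i_1$ needed to cancel the leading $+i_1$, so that the sum genuinely closes up. No properties of the permutation $\sigma$, of the Hurwitz matrices, or of the coefficients $a_i$, $b_i$ enter; the statement is purely an identity about a cyclically indexed sequence of integers, and the proof is one line once the terms are written out.
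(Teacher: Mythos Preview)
Your proof is correct and matches the paper's own treatment: the paper simply states that the lemma follows immediately from Definition~\ref{def:vd}, and your telescoping-sum argument is exactly the one-line verification that makes this immediate.
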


\begin{remark}\label{rem:h-g-l}
  We notice that
  $\det H_l (1,h) =\det G_l (h)$ for $l=1,2, \ldots s$  if $\lambda
  =1$.
\end{remark}

Now we can turn to the proof of Proposition~\ref{pro:hom}:
\begin{proof}[Proof of Proposition~\ref{pro:hom}]
  Let $c=(i_1, \ldots  ,i_k)$ be a cycle. By by
  Corollary~\ref{coro:cp_J_lamH} we have for the product
  $a_{i_1} \ldots a_{i_k}$
  \begin{displaymath}
    a_{i_1} \ldots a_{i_k} = \lambda^{\delta} b_{i_1} \ldots
    b_{i_k}, \text{ where $\delta=i_1 +\ldots + i_k$.}
  \end{displaymath}
  By Lemma~\ref{lem:hc} and in particular~(\ref{eq:det-c}) 
  \begin{equation}
    \begin{split}
      \label{eq:hlc}
      H_l [c]  &= a_{i_1 +d_{i_1 i_2}} \ldots a_{i_k +d_{i_k i_1}}  =
      \lambda^{i_1 +d_{i_1 i_2}} b_{i_1+d_{i_1 i_2}} \ldots
      \lambda^{i_k +d_{i_k i_1}} b_{i_k+d_{i_k i_1}}  \\
      &=\lambda^{\delta} b_{i_1} \ldots b_{i_k}
      = \lambda^{\delta} G_l [c]
    \end{split}
  \end{equation}
  where we have used that the sum of the differences of a cycle sum up
  to zero  by  Lemma~\ref{ldp}.
 
  Let the sum of the indices of each cycle $c_1,, \ldots c_f$ of a
  permutation $\sigma$ be denoted by
  $\delta_1$, \ldots ,$\delta_f$,
  correspondingly.    We  use the fact that   the cycles $\{c_1,
  \ldots ,c_f \}$  in a permutation $\sigma$  are pairwise disjoint.
  Thus it follows that
  \begin{equation}\label{eq:ss}
    \delta_1  + \delta_2  + \ldots + \delta_f = 1 +2 + \ldots + l =
    \frac{l(l+1)}{2}.
  \end{equation} 
  
  By Lemma~\ref{lem:hc}, (\ref{eq:hlc}) and (\ref{eq:ss}) we have
  \begin{align*}
    \det H_l (\lambda,  h) = & \Sigma_{\sigma} (-1)^{l+f} H_l [c_1]
                               \ldots H_l [c_f] \\
                             &= \Sigma_{\sigma} (-1)^{l+f}
                               \lambda^{\delta_1} G_l [c_1] \ldots
                               \lambda^{\delta_f} G_l [c_f]  \\
                             &= \Sigma_{\sigma} (-1)^{l+f}
                               \lambda^{\delta_1 +\ldots +\delta_f}
                               G_l [c_1] \ldots  G_l [c_f]  \\
                             &= \lambda^{l(l+1)/2} \Sigma_{\sigma}
                               (-1)^{l+f} G_l [c_1] \ldots  G_l [c_f]
    \\
                             &= \lambda^{l(l+1)/2} \det G_l (h)\ . 
  \end{align*}
  Thus, $\det H_l (\lambda, h) =\lambda^{l (l+1)/2} \det G_l (h)$ for
  $l=1,2, \ldots , s$.  
\end{proof}

\section{Data for network~(\ref{net:cyc_simple})}
\label{app:jac-h-lambda}

\begin{table}[!h]
  \centering
  \begin{tabular}{|>{$}c<{$}|>{$}c<{$}|} \hline
    S_{00} + K  & y^{(1)} =(1, 0, 1, 0, 0, 0, 0, 0, 0, 0)^T \\ \hline
    S_{00} K & y^{(2)} =( 0, 0, 0, 0, 0, 0, 1, 0, 0, 0)^T \\ \hline
    S_{10} + K & y^{(3)} =( 1, 0, 0, 1, 0, 0, 0, 0, 0, 0)^T \\ \hline
    S_{10} K & y^{4)} =(  0, 0, 0, 0, 0, 0, 0, 1, 0, 0)^T \\ \hline
    S_{11} + K & y^{(5)} =(1,0,0,0,0,1,0,0,0,0)^T \\ \hline
    S_{11} + F & y^{(6)} =( 0, 1, 0, 0, 0, 1, 0, 0, 0, 0)^T \\ \hline
    S_{11} F & y^{(7)} =( 0, 0, 0, 0, 0, 0, 0, 0, 0, 1)^T \\ \hline
    S_{01} + F & y^{(8)} =( 0, 1, 0, 0, 1, 0, 0, 0, 0, 0)^T \\ \hline
    S_{01} F & y^{(9)} =( 0, 0, 0, 0, 0, 0, 0, 0, 1, 0)^T \\ \hline
    S_{00} + F & y^{(10)} =( 0, 1, 1, 0, 0, 0, 0, 0, 0, 0)^T \\ \hline
  \end{tabular}
  \caption{Complexes of network~(\ref{eq:MA_cyc})
    and~(\ref{net:cyc_simple}).}
  \label{tab:complexes-cyc-and-cyc-simple}
\end{table}
The stoichiometric matrix~$S$, the exponent matrix~$Y$ and a
matrix~$W$ defining the conservation relations:
\begin{displaymath}
  \begin{split}
    \tiny
    S &= \left[
      \begin{array}{rrrrrrrr}
        -1 & 1 & -1 & 1 & 0 & 0 & 0 & 0 \\
        0 & 0 & 0 & 0 & -1 & 1 & -1 & 1 \\
        -1 & 0 & 0 & 0 & 0 & 0 & 0 & 1 \\
        0 & 1 & -1 & 0 & 0 & 0 & 0 & 0 \\
        0 & 0 & 0 & 0 & 0 & 1 & -1 & 0 \\
        0 & 0 & 0 & 1 & -1 & 0 & 0 & 0 \\
        1 & -1 & 0 & 0 & 0 & 0 & 0 & 0 \\
        0 & 0 & 1 & -1 & 0 & 0 & 0 & 0 \\
        0 & 0 & 0 & 0 & 0 & 0 & 1 & -1 \\
        0 & 0 & 0 & 0 & 1 & -1 & 0 & 0 \\
      \end{array}
    \right],\; 
    Y= \left[
      \begin{array}{rrrrrrrr}
        1 & 0 & 1 & 0 & 0 & 0 & 0 & 0 \\
        0 & 0 & 0 & 0 & 1 & 0 & 1 & 0 \\
        1 & 0 & 0 & 0 & 0 & 0 & 0 & 0 \\
        0 & 0 & 1 & 0 & 0 & 0 & 0 & 0 \\
        0 & 0 & 0 & 0 & 0 & 0 & 1 & 0 \\
        0 & 0 & 0 & 0 & 1 & 0 & 0 & 0 \\
        0 & 1 & 0 & 0 & 0 & 0 & 0 & 0 \\
        0 & 0 & 0 & 1 & 0 & 0 & 0 & 0 \\
        0 & 0 & 0 & 0 & 0 & 0 & 0 & 1 \\
        0 & 0 & 0 & 0 & 0 & 1 & 0 & 0 \\
      \end{array}
    \right],\\
    W^T &=
    \begin{bmatrix}
      1&0&0&0&0&0&1&1&0&0\\
      0&1&0&0&0&0&0&0&1&1\\
      0&0&1&1&1&1&1&1&1&1\\
    \end{bmatrix}
  \end{split}
\end{displaymath}
The vector of rate functions is
\begin{equation}
  \label{eq:v_cyc_E}
  v(k,x) = (\kappa_1 x_1 x_3, \kappa_3x_7, \kappa_4x_1x_4, \kappa_6x_8, \kappa_7 x_2 x_6,\kappa_9x_{10},\kappa_{10}x_2x_5,\kappa_{12}x_9)^T \; .
\end{equation}
The diagonal matrix of rate constants
\begin{displaymath}
\diag(k)= \left[
    \begin{array}{cccccccc}
      k_{1} & 0 & 0 & 0 & 0 & 0 & 0 & 0  \\
      0 & k_{2} & 0 & 0 & 0 & 0 & 0 & 0  \\
      0 & 0 & k_{3} & 0 & 0 & 0 & 0 & 0  \\
      0 & 0 & 0 & k_{4} & 0 & 0 & 0 & 0 \\
      0 & 0 & 0 & 0 & k_{5} & 0 & 0 & 0 \\
      0 & 0 & 0 & 0 & 0 & k_{6} & 0 & 0\\
      0 & 0 & 0 & 0 & 0 & 0 & k_{7} & 0\\
      0 & 0 & 0 & 0 & 0 & 0 & 0 & k_{8}  \\
      \end{array}
  \right]
\end{displaymath}
The vector of monomials of the rate function $v(k,x)$
\[
\phi(k,x) = (x_1 x_3, x_7, x_1x_4, x_8,  x_2 x_6,x_{10},x_2x_5,x_9)^T \; .
\]

\section{Initial data for Fig.~\ref{fig:cyc_S11_Ktot} \& 
  \ref{fig:cyc_S00_S11_t} }
\label{sec:initial_data}

To generate Fig.~\ref{fig:cyc_S11_Ktot} \& \ref{fig:cyc_S00_S11_t} the
ODEs derived from the full (reversible) reaction
network~(\ref{eq:MA_cyc}) are used (i.e.\ the
ODEs~(\ref{eq:sys_cyc_full_1}) -- (\ref{eq:sys_cyc_full_10})).  
In both Fig.~\ref{fig:cyc_S11_Ktot} and \ref{fig:cyc_S00_S11_t} we use
the point displayed in Table~\ref{tab:ini_fig_Intro} as initial value. 
\begin{table}[!h]
  \centering
  \begin{tabular}{|c|c|c|c|c|c|c|c|c|c|}\hline
    $x_1$ & $x_2$ & $x_3$ & $x_4$ & $x_5$ & $x_6$ & $x_7$ & $x_8$
    & $x_9$ & $x_{10}$ \\ \hline 
    $\frac{1}{7}$ & $\frac{1}{7}$ & $\frac{1}{7}$ & $1$ & $1$
    & $\frac{1}{7}$ & $2$ &$\frac{1}{2}$ & $\frac{4}{3}$ & $4$ \\ \hline
  \end{tabular}
  \caption{Initial value for Fig.~\ref{fig:cyc_S11_Ktot} \&
    \ref{fig:cyc_S00_S11_t}} 
  \label{tab:ini_fig_Intro}
\end{table}

\begin{remark}
  The point in Table~\ref{tab:ini_fig_Intro} is a steady state of the
  irreversible network~(\ref{net:cyc_simple}) generated according to
  our procedure (with $h_7=\frac{1}{2}$, $h_8=2$, $h_9=\frac{3}{4}$,
  $h_{10}=\frac{1}{4}$, $h_4=h_5=1$ and $t=1$).

  We use Matlab's \texttt{ode15s} to solve the initial value problem
  defined by the above ODEs with initial value given in
  Table~\ref{tab:ini_fig_Intro} and  
  the backward constants 
  $\kappa_2=\kappa_5=\kappa_8=\kappa_{11}=\frac{1}{10}$ (and the
  remaining constants according to our procedure for the
  irreversible network).
  These backward constants are \lq small enough\rq{} in the sense of
  the results of \cite{Banaji2018},  that is, that the reversible
  network~(\ref{eq:MA_cyc}) has a limit cycle close to the limit cycle
  of the irreversible network.

  However, on the one hand the steady state of
  Table~\ref{tab:ini_fig_Intro} 
  (of the irreversible network) is \lq far 
  enough\rq{} from a steady state of the reversible network in the
  following sense: the solution of the reversible system with initial
  value given in Table~\ref{tab:ini_fig_Intro} approaches a limit
  cycle of the reversible system (if approximated with
  \texttt{ode15s}). If the point given in 
  Table~\ref{tab:ini_fig_Intro} was \lq too close\rq{} to a steady
  state of the reversible system the solution with it as initial value
  would approach that steady state (if approximated with
  \texttt{ode15s}).

  On the other hand, Fig.~\ref{fig:cyc_S11_Ktot} is generated with
  \texttt{Matcont} 
  using the point given in Table~\ref{tab:ini_fig_Intro} as an initial
  guess of a steady state of the above ODEs (of the reversible
  network). And this point is close enough to a steady state of the
  reversible network for \texttt{Matcont} to converge to a steady
  state. 
\end{remark}

\section{Initial data for Fig.~\ref{fig:seq_S2_Ktot}}
\label{sec:Initial_data_Intro_Exa_Seq}

To obtain Fig.~\ref{fig:seq_S2_Ktot} we
follow~\cite{maya-bistab}, where a system of nine ODEs is derived from
network~(\ref{eq:MA-seq}). As in this
reference, we use the variables given in
Table~\ref{tab:vars_for_Intro_Exa_Seq} to denote the species
concentrations. As this network does not distinguish $S_{10}$ and
$S_{01}$, there is only one mono-phosphorylated from of the substrate
($S_1$) and $S_2$ is used to denote the double-phosphorylated
substrate. Consequently, this network contains only nine species. It
has, however, $12$ reactions and the labeling is consistent with
network~(\ref{eq:MA_cyc}).

\begin{table}[!h]
  \centering
  \begin{tabular}{|c|c|c|c|c|c|c|c|c|}\hline
    $x_1$ & $x_2$ & $x_3$ & $x_4$ & $x_5$ & $x_6$ & $x_7$ & $x_8$
    & $x_9$  \\ \hline 
    $S_0$ & $K$ & $K S_0 $ & $S_1$ & $K S_1$ & $S_2$ & $F$
    & $F S_2$ & $F S_1 $ \\ \hline
  \end{tabular}
  \caption{Variables and species for network~(\ref{eq:MA-seq}) (as
    in~\cite{maya-bistab}).}
  \label{tab:vars_for_Intro_Exa_Seq}
\end{table}
Using the approach described in \cite{maya-bistab} we obtain the steady
state depicted in Table~\ref{tab:ini_for_Intro_Exa_Seq}.  We use this point as
a starting point for the numerical continuation in \texttt{Matcont}. 

\begin{table}[!h]
  \centering
  \begin{tabular}{|c|c|c|c|c|c|c|c|c|}\hline
    $x_1$ & $x_2$ & $x_3$ & $x_4$ & $x_5$ & $x_6$ & $x_7$ & $x_8$
    & $x_9$  \\ \hline 
    $\frac{1}{10}$ & $1$ & $\frac{49}{6}$ & $\frac{119}{180}$
    & $\frac{119}{54}$ & $\frac{17}{135}$ & $1$ & $\frac{476}{27}$
    & $\frac{49}{9}$ \\ \hline
  \end{tabular}
  \caption{
    Starting point for the numerical continuation in
    Fig.~\ref{fig:seq_S2_Ktot} (a steady state of network~(\ref{eq:MA-seq})
    for the rate constant values of
    eq.~(\ref{eq:exa_k_seq_bistab_cyc_osci}) generated as described
    in~\cite{maya-bistab}).
  }
  \label{tab:ini_for_Intro_Exa_Seq}
\end{table}

\subsection*{Acknowledgments}
Maya Mincheva wishes to thank the Institute of Mathematics and
Informatics at the Bulgarian Academy of Sciences where part of this
research was done  for their hospitality.

Carsten Conradi was partially funded by DFG Grant 517274113.

MM and CC thank all reviewers for their diligent review and helpful
suggestions.



\providecommand{\bysame}{\leavevmode\hbox to3em{\hrulefill}\thinspace}
\providecommand{\MR}{\relax\ifhmode\unskip\space\fi MR }
\providecommand{\MRhref}[2]{%
  \href{http://www.ams.org/mathscinet-getitem?mr=#1}{#2}
}
\providecommand{\href}[2]{#2}

\end{document}